\newtheorem{theorem}{Theorem}
\newtheorem{definition}[theorem]{Definition}
\newtheorem{lemma}[theorem]{Lemma}
\newtheorem{proposition}[theorem]{Proposition}
\newtheorem{corollary}[theorem]{Corollary}
\newcommand{\sfTr}[1]{\mathrm{Tr}\left\{#1\right\}}
\newcommand{\sfTrAbs}[1]{\mathrm{TrAbs}\left\{#1\right\}}
\newcommand{\opTr}[1]{\mathrm{Tr}\left\{#1\right\}}
\newcommand{\opTrAbs}[1]{\mathrm{TrAbs}\left\{#1\right\}}
\newcommand{\bbTr}[1]{\mathrm{Tr}\left\{#1\right\}}
\newcommand{\bbTrAbs}[1]{\mathrm{TrAbs}\left\{#1\right\}}
\def\sfA{\mathsf{A}}
\def\sfB{\mathsf{B}}
\def\sfc{\mathsf{c}}
\def\sfK{\mathsf{K}}
\def\sfR{\mathsf{R}}
\def\sfV{\mathsf{V}}
\def\sfH{\mathsf{H}}
\def\sfZ{\mathsf{Z}}
\def\sfw{\mathsf{w}}
\def\sfW{\mathsf{W}}
\def\bbH{\mathbb{H}}
\def\bbS{\mathbb{S}}
\def\bbL{\mathbb{L}}
\def\bbU{\mathbb{U}}
\def\bbV{\mathbb{V}}
\def\bbX{\mathbb{X}}
\def \be {\begin{equation}}
\def \ee {\end{equation}}
\newcommand{\I}{\mathrm{i}\,}
\newcommand{\tr}[1]{\mathrm{tr}\left(#1\right)}
\newcommand{\Tr}[1]{\mathrm{Tr}\left\{#1\right\}}
\newcommand{\Trabs}[1]{\mathrm{TrAbs}\left\{#1\right\}}
\newcommand{\Det}[1]{\mathrm{Det}\left\{#1\right\}}
\def \Re{\mathrm{Re}\,}
\def \Im{\mathrm{Im}\,}
\def \e{\mathrm{e}}
\def \m{\mathrm{m}}
\def \e1{(\mathrm{e})}
\def \m1{(\mathrm{m})}
\def \del{\partial}
\def \Lra{\Leftrightarrow}
\def \cL{{\cal L}}
\def \cH{{\cal H}}
\def \cX{{\cal X}}
\def \sofh{{\cal S}({\cal H})}
\def \bbr{{\mathbb R}}
\def \bbc{{\mathbb C}}
\def \sofc2{{\cal S}({\mathbb C}^2)}
\def \prior{\pi(\theta)}
\def \lprior{\ell_{\pi}}
\def \htheta{\hat{\theta}}
\def \cvT{{\cal C}_\mathrm{vT}}
\def \cNH{{\cal C}_\mathrm{NH}}
\def \cN{{\cal C}_\mathrm{N}}
\def \cHo{{\cal C}_\mathrm{H}}
\def \cSLD{{\cal C}_\mathrm{SLD}}
\def \cRLD{{\cal C}_\mathrm{RLD}}
\def \rhoB{S_{\rm B}}
\def \MBj{{D}_{\mathrm{B},j}}
\def \MBk{{D}_{\mathrm{B},k}}
\def \Mtheta{D(\theta)}
\def \Mthetaj{D_{j}(\theta)}
\def \MthetaT{D^\intercal(\theta)}
\def \mutwo{\mathsf{m}}
\def \muTwo{\mathsf{M}}
\def \sfZB{\mathsf{Z}_{\mathrm{B}}}
\def \sfHB{\mathsf{H}_{\mathrm{B}}}
\def \bbSbar{\overline{\bbS}}
\def \Mbar{\overline{D}}
\def \Mbarj{\overline{D}_j}
\def \sfwbar{\overline{\sfw}}
\def \sfwtheta{\sfw(\theta)}
\begin{document}
\title{Bayesian Nagaoka-Hayashi Bound for Multiparameter Quantum-State Estimation Problem}
\author{Jun Suzuki}
\affiliation{Graduate School of Informatics and Engineering,
The University of Electro-Communications,
1-5-1 Chofugaoka, Chofu, Tokyo 182-8585, Japan.
Email: junsuzuki@uec.ac.jp}
\date{\today}

\begin{abstract}
In this work we propose a Bayesian version of the Nagaoka-Hayashi bound when estimating a parametric family of quantum states. This lower bound is a generalization of a recently proposed bound for point estimation to Bayesian estimation. We then show that the proposed lower bound can be efficiently computed as a semidefinite programming problem. As a lower bound, we also derive a Bayesian version of the Holevo-type bound from the Bayesian Nagaoka-Hayashi bound. Lastly, we prove that the new lower bound is tighter than the Bayesian quantum Cram\'{e}r-Rao bounds.
\end{abstract}
\maketitle

\section{Introduction}
Bayesian parameter estimation with prior knowledge on unknown parameters naturally enters when estimating signals upon communication processes \cite{vTbook1}.
Quantum communication is a promising near term communication technology, 
which can transmit information more securely and efficiently than the classical protocols. 
There have been many investigation on how classical and/or quantum information can be transmitted faithfully over a given noisy quantum channel, see for example \cite{gt07,wilde13,holevo19}.
Quantum Bayesian estimation is a key ingredient for decoding classical information encoded in quantum states efficiently. Quantum Bayesian estimation also has gotten a great attention in the field of quantum sensing and quantum metrology \cite{jd15,dbgmb19,gsp21,nsp21}. 

Quantum Bayesian estimation was initiated about half a century ago by Personick \cite{personick_thesis,personick71}. Due to recent advances in the quantum estimation theory, the quantum Bayesian estimation problem has gotten a renew interest by the community.
Several quantum Bayesian bounds were proposed for the Bayes risk, see for example,  \cite{personick_thesis,personick71,hlg70,hayashi_review,tsang_zk,lt16,rd2020,demkowicz2020,tsang_gl}.
However, most of them do not capture the genuine quantum nature,
since known lower bounds are based on almost a direct translation of classical Bayesian bounds. 
In particular, previously proposed lower bounds are derived by applying the Cauchy-Schwarz type inequality with respect to a certain choice for inner products on an operator space. 
Holevo initiated the investigation of nontrivial lower bounds for quantum estimation in the context of the general statistical decision problems \cite{holevo_1973}. 
He also analyzed lower bounds for the Bayes risk based on the quantum Fisher information matrices \cite{holevo_bayes,holevo_monograph,holevo_qest}. 
In particular, he gave a thorough analysis on the Gaussian shift model in the Bayesian setting.  

When estimating non-random parameters, the Holevo bound established the unique nature of the quantum estimation theory \cite{holevo,nagaoka89}. 
This is because it is expressed as a certain optimization problem without use of any quantum Fisher information matrix. 
Later, Nagaoka proposed a tighter bound for two-parameter estimation \cite{nagaoka89}. 
This lower bound is based on a different statistical problem in which one aims at finding an approximated diagonalization of two noncommuting matrices \cite{nagaoka91}. 
This Nagaoka's result was generalized by Hayashi for any finite number of noncommuting matrices \cite{hayashi99}. 
In a recent paper \cite{lja2021}, the Nagaoka bound for parameter estimation was generalized to estimating any number of non-random parameters, which was named as the Nagaoka-Hayashi bound. 
In this paper, we attempt to make a further step toward developing genuine quantum bounds based on the Nagaoka-Hayashi bound in Bayesian parameter estimation. 
In particular, we propose two Bayesian versions of these lower bounds, the Nagaoka-Hayashi bound and the Holevo-type bound. 
The unique nature of the proposed lower bounds is that they are expressed as a certain optimization problem.

This paper is organized as follows. Section 2 gives a brief summary of existing quantum Bayesian bounds. 
In Sec.~3, we propose two new quantum Bayesian bounds, and we also show that the proposed lower bound is tighter than the 
previously known Bayesian quantum Cram\'{e}r-Rao bounds.
In Sec.~4, we conclude and list some open problems. 
Technical lemmas are given in Appendix. 

\section{Preliminaries}
Let $\cH$ be a finite dimensional Hilbert space and denote by $\sofh$ the totality of density matrices on $\cH$.
A quantum parametric model is a smooth family of density matrices on $\cH$, $\{S_\theta\,|\,\theta\in\Theta\}$, which is parametrized by $n$-parameter $\theta=(\theta_1,\theta_2,\ldots,\theta_n)$.
In the following, we consider a regular model, in particular, $S_\theta$ is full rank for all $\theta\in\Theta$ and 
the map $\theta\mapsto S_\theta$ is smooth and one-to-one. 
A measurement is described by a set of positive semidefinite matrices $\Pi_x$ where the index $x$ corresponds to a measurement outcome.
The set of operators corresponding to a quantum measurement is normally called a positive operator-valued measure (POVM), and it is defined by
\begin{equation*}
  \Pi=\{\Pi_x\};\ \forall x\in\cX,\Pi_x\ge0,\ \sum_{x\in\cX}\Pi_x=I,
\end{equation*}
where the measurement outcomes are labeled by an arbitrary set $\cX$ and $I$ is the identity operator on $\cH$. 
When the measurement outcomes are labelled with a continuous set, 
the condition on the POVM elements is $\forall x,\Pi_x\ge0,\ \int_\cX dx\,\Pi_x=I$. 

Measurement outcome is described by a random variable $X$ that obeys the conditional probability distribution:
$p_\theta(x)=\opTr{S_\theta \Pi_x}$ where $\opTr{\cdot}$ denotes the trace on $\cH$. 
The expectation value for a random variable $X$ is denoted by $E_\theta[X|\Pi]=\sum_x xp_\theta(x)$.
To infer the parameter value, we use an estimator that returns values on the set $\Theta$:
$\htheta=(\htheta_i):\ \cX\to\Theta$.
The performance of the estimator is quantified by a loss function:
\begin{equation*}
  L:\ \Theta\times \Theta\to\{x\in\bbr|x\ge0\}\cup\{\infty\}.
\end{equation*}
In this study, we adopt $L(\theta,\htheta)=\sum_{i,j}(\htheta_i-\theta_i)\sfW_{ij}(\theta)(\htheta_j-\theta_j)$.
Here $\sfW(\theta)=[\sfW_{ij}(\theta)]$ is an $n\times n$ positive semidefinite matrix, called a weight (cost) matrix.
As a special case, $W$ can be parameter independent. 
In the language of statistical decision theory, the set $(\Pi,\htheta)$ is called a quantum decision. 

The main objective of parameter estimation about quantum states $\{S_\theta\}$ is to
find the best quantum decision $(\Pi,\htheta)$ that minimizes the loss function.
As the measurement outcomes are random, we need to further identify a risk for this optimization.
\begin{definition}
The Bayes risk for a given prior probability distribution $\prior$ on $\Theta$ is defined by
\begin{equation}\label{def:Brisk1}
\sfR[\Pi,\htheta]:=\int_\Theta d\theta\, \prior E_\theta\big[L\big(\theta,\htheta(X)\big)\big|\Pi\big].
\end{equation}
\end{definition}
With this quantum Bayes risk, the objective is to find the best quantum decision that minimizes the risk, i.e. the minimization problem over all possible quantum decisions $(\Pi,\htheta)$: 
\begin{equation}
\inf_{\Pi,\htheta}\sfR[\Pi,\htheta].
\end{equation}
Denoting the joint distribution by $p(\theta,x):=\prior p_\theta(x)=\tr{\prior S_\theta\Pi_x}$,
the Bayes risk \eqref{def:Brisk1} is also written as
\begin{equation*}
  \sfR[\Pi,\htheta]=E_p\big[L\big(\theta,\htheta(X)\big)\big],
\end{equation*}
where $E_p[\cdot]$ denotes the expectation value with respect to the joint distribution $p(\theta,x)$.

In the following discussion, we will also express the Bayes risk in terms of the mean square error (MSE) matrix whose $(j,k)$ component is defined by
\begin{equation*}
  \sfV_{\theta,jk}[\Pi,\htheta]:=E_\theta \big[(\htheta_j(X)-\theta_j)(\htheta_k(X)-\theta_k)\big|\Pi\big].
\end{equation*}
This then gives an alternative expression for the Bayes risk:
\begin{equation}\label{def:Brisk2}
  \sfR [\Pi,\htheta]= \int_\Theta d\theta\,\prior\sfTr{\sfW(\theta)\sfV_\theta[\Pi,\htheta]},
\end{equation}
where $\sfTr{\cdot}$ denotes the trace for matrices on the $n$-dimensional parameter space.

\subsection{Quantum van Tree inequality}
The classical van Tree inequality is based on the covariance inequality \cite{vTbook1}.
This inequality is applicable to bound the Bayesian MSE matrix 
\begin{equation*}
  \sfV_{\mathrm{B}}[\Pi,\htheta]:=\int_\Theta d\theta\,\prior\sfV_\theta[\Pi,\htheta].
\end{equation*}
Assume mild regularity conditions on the prior and the parametric model \cite{gill_levit}; 
$\pi$ and $p_\theta$ are absolutely continuous, and $\pi$ vanishes at the boundaries of the parameter set $\Theta$. 
The resulting matrix inequality is
\begin{align*}
  \sfV_{\mathrm{B}}[\Pi,\htheta]&\ge \left(J_{\mathrm{B}}[\Pi]\right)^{-1}, \\
  J_{\mathrm{B}}[\Pi]&:=J(\pi)+\int_\Theta d\theta\,\prior J_\theta[\Pi],\\
  J_{ij}(\pi)&:=\int_\Theta d\theta\,\prior \frac{\del \lprior}{\del \theta_i}\frac{\del \lprior}{\del \theta_j},
\end{align*}
where $\lprior(\theta):=\log \prior$ and $J_\theta[\Pi]$ is the Fisher information matrix about the distribution $p_\theta(x)$.
The van Tree inequality can be generalized in order to include the parameter dependent weight matrix.
This can be accomplished by the use of the Gill-Levit bound \cite{gill_levit}.
A quantum version of the Gill-Levit bound was recently proposed \cite{tsang_gl}.

It was Personick who proposed a quantum version of the van Tree inequality for the Bayes risk \cite{personick_thesis,personick71}.
Without going into the details, we give his result. 
It is known that the Fisher information matrix is bounded by the quantum Fisher information matrix $J^{\mathrm{Q}}_\theta$.
Using this fact, one can derive the quantum van Tree inequality:
\begin{align}\label{eq:QvTineq}
  \sfV_{\mathrm{B}}[\Pi,\htheta]&\ge (J^{\mathrm{Q}}_{\mathrm{B}})^{-1},\\ 
  J^{\mathrm{Q}}_{\mathrm{B}}&:=J(\pi)+\int_\Theta d\theta\,\prior J^{\mathrm{Q}}_\theta.\nonumber
\end{align}
With this inequality, one gets a lower bound for the Bayes risk \eqref{def:Brisk2} when
the weight matrix $\sfW$ is parameter independent.
\begin{equation*}
  \sfR[\Pi,\htheta]\ge \cvT:=\sfTr{\sfW (J^{\mathrm{Q}}_{\mathrm{B}})^{-1}}.
\end{equation*}
Well-known examples for the quantum Fisher information matrices are the symmetric logarithmic derivative (SLD) Fisher information matrix \cite{helstrom67} and the right logarithmic derivative (RLD) Fisher information matrix \cite{yuen-lax}. 
Originally, this lower bound \eqref{eq:QvTineq} was proven in Personick's thesis \cite[Sec.2.2.2]{personick_thesis} immediately after the pioneering work by Helstrom that formulated point estimation about quantum states \cite{helstrom67}.
In his thesis, he also derived quantum versions of Bhattacharyya and Barankin bounds, however his results seem too early to be appreciated by the community at the time.

\subsection{Bayesian quantum Cram\'{e}r-Rao bounds}
\subsubsection{Bayesian SLD Cram\'{e}r-Rao bound}
Personick also proposed a different method to derive a lower bound for the Bayes risk in the same paper where he proposed the quantum van Tree inequality \cite{personick71}.
In the published paper, he considered one-parameter estimation, and then he proved this lower bound is tight. However, it is less known that he also derived a lower bound for the general $n$-parameter estimation problem \cite{personick_thesis}.

Define the averaged states and the first moment by
\begin{align}
\rhoB&:=\int_\Theta d\theta\,\prior S_\theta,\\
\MBj&:=\int_\Theta  d\theta\,\prior\theta_j S_\theta. 
\end{align}
Next, consider a set of Hermitian matrices $L_j$ satisfying the so-called Bayesian version of the SLD equation:
\begin{equation}
\MBj=\frac12(\rhoB L_j+L_j\rhoB ).
\end{equation}
For a positive definite averaged state $\rhoB$, the Bayesian SLD $L_j$ is uniquely defined by the solution.
Then, the real symmetric matrix $\sfK=[\sfK_{jk}]$, the Bayesian SLD Fisher information matrix, is defined by
\begin{equation}
{\sfK}_{jk}=\langle L_j,L_k\rangle_{\rhoB}
=\frac12\opTr{\rhoB(L_jL_k+L_kL_j)}.
\end{equation}
Here, $\langle X,Y\rangle_{\rhoB}:=\opTr{\rhoB(X^\dagger Y+YX^\dagger)}/2$
is the symmetrized inner product for linear operators $X,Y$ on $\cH$ with respect to the state $\rhoB$, 
and $X^\dagger$ denotes the Hermitian conjugation of $X$.

For one parameter estimation, Personick proved the following inequality \cite{personick71}.
\begin{equation*}
  \sfV_{\mathrm{B}}[\Pi,\htheta]\ge \mutwo-\sfK,
\end{equation*}
where $\mutwo=\int d\theta \,\prior \theta^2$ is the second moment. 
Since the random parameter $\theta$ is scalar, 
the second term is $\sfK=\opTr{\rhoB L_{1}^{2}}$. 

Almost half century after the seminal work by Personick,
Rubio and Dunningham generalized the Personick bound
based on a different approach \cite{rd2020}.
They proved
\begin{equation}\label{def:RDbound}
  \sfV_{\mathrm{B}}[\Pi,\htheta] \ge \muTwo-\sfK,
\end{equation}
where the first term $\muTwo$ is an $n\times n$ real symmetric matrix whose $(j,k)$ component is defined by
\begin{equation*}
\muTwo_{jk}=\int_\Theta d\theta\,\prior\theta_j\theta_k.
\end{equation*}
This is the second-moment matrix of the random variable $\theta$.
The second term is a contribution regarded as the quantum nature, and can be interpreted as the Bayesian version of the SLD Fisher information matrix.
Their bound, Eq.~\eqref{def:RDbound} with a parameter-independent weight matrix, then takes the form
\begin{equation*}
\cSLD=\sfTr{\sfW (\muTwo-{\sfK})}.
\end{equation*}
In the following, we call this lower bound as the Bayesian SLD Cram\'{e}r-Rao bound.

\subsubsection{Bayesian RLD Cram\'{e}r-Rao bound}
The above construction of the Bayesian SLD Fisher information invites us to utilize the Bayesian version of other  quantum Fisher information matrices. 
Holevo proposed a bound based on the Bayesian RLD Fisher information matrix for the parameter-independent weight matrix case \cite{holevo_1973}. 
He applied it to the general quantum Gaussian system and derived the optimal estimator when the prior distribution is also a Gaussian distribution \cite{holevo_monograph}. 
Below we state the result only and its derivation can be found in Ref.~\cite{holevo_1973}. 

Let $\rhoB$ and $\MBj$ the Bayesian averaged state and the first moment defined before. 
We define the Bayesian RLD $\widetilde{L}_{j}$ by the RLD-type equation:
\begin{equation}\label{def:BRLD}
\MBj=\rhoB \widetilde{L}_{j}.
\end{equation}
Under the assumption of positivity of states, we have the unique solution $\widetilde{L}_{j}=\rhoB^{-1}\MBj$. 
Define an Hermitian $n\times n$ matrix $\widetilde{K}$ whose $(j,k)$ component is 
\begin{equation}
\widetilde{\sfK}_{jk}:=\opTr{\rhoB \widetilde{L}_{k}\widetilde{L}_{j}^\dagger}. 
\end{equation}
The Bayesian version of the RLD Cram\'{e}r-Rao bound is given as
\begin{equation}\label{def:BRLDCR}
\cRLD:=\sfTr{\sfW (\muTwo-\Re\widetilde{\sfK})}+\sfTrAbs{\sfW \,\Im\widetilde{\sfK}},
\end{equation}
where $\Re A$ ($\Im A$) denotes the component-wise real (imaginary) part of $A$ for a matrix $A\in\bbc^{n\times n}$, 
and $\sfTrAbs{A}$ denotes the sum of the absolute values for the eigenvalues of $A$. 

\section{Results}
In this section, we propose a new lower bound for the Bayes risk, which is applicable for arbitrary weight matrix. To state the main theorem, we need to introduce several notations first. 
\subsection{Alternative expression for the Bayes risk}
Noting the MSE matrix $\sfV_\theta$ consists of four terms
\begin{multline*}
\sfV_{\theta,jk}[\Pi,\htheta]=\sum_x \htheta_{j}(x)\htheta_{k}(x)\opTr{S_\theta\Pi_x}\\
\hspace{2cm}-\sum_x \theta_j\hat{\theta}_{k}(x)\opTr{S_\theta\Pi_x}\\
-\sum_x \hat{\theta}_{j}(x)\opTr{S_\theta\Pi_x}\theta_k
+\theta_j\theta_k,
\end{multline*}
we introduce the following Hermitian matrices on $\cH$:
\begin{align*}
\mathbb{L}_{jk}[\Pi,\hat{\theta}]&=\sum_x \hat{\theta}_{j}(x)\Pi_x\hat{\theta}_{k}(x)\quad(j,k=1,2,\dots,n),\\
X_j[\Pi,\hat{\theta}]&=\sum_{x}\hat{\theta}_{j}(x)\Pi_x \quad(j=1,2,\ldots,n).
\end{align*}
Importantly, these matrices are solely defined by a quantum decision $\Pi,\hat{\theta}$, and hence 
they are independent of model parameter $\theta$. 
In the following, we often omit the argument $\Pi$ and $\hat{\theta}$, when it is clear from the context. 
With these definitions, the MSE matrix is expressed as
\begin{equation*}
\sfV_{\theta,jk}[\Pi,\htheta]=\opTr{S_\theta(\bbL_{jk}-\theta_jX_k-X_j\theta_k+\theta_j\theta_k)}.
\end{equation*}

We next define a matrix and a vector which are defined on the extended Hilbert space \cite{lja2021}.
Consider the Hilbert space $\mathbb{H}:=\bbc^n\otimes\cH$, and define $ \mathbb{L}$ on $\mathbb{H}$ whose $(j,k)$ block -matrix component is given by $\mathbb{L}_{jk}$. 
We also define a column vector $X$ whose $j$th component is $X_j$ by $X=(X_1,X_2,\ldots,X_n)^\intercal$.
Here, we denote transpose of matrices and vectors with respect to $\bbc^n$ by $(\cdot)^\intercal$.
The fundamental inequality is stated in the following lemma, which is a variant of Holevo's lemma \cite{holevo}. 
\begin{lemma}[\cite{hayashi99}] \label{lem:holevo}
For all POVMs and estimators, $\mathbb{L}$ and $X$ obey the matrix inequality:
\begin{equation*}
\mathbb{L}[\Pi,\hat{\theta}]\ge X[\Pi,\hat{\theta}] \left(X[\Pi,\hat{\theta}]\right)^\intercal .
\end{equation*}
\end{lemma}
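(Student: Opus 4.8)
The plan is to prove the operator inequality $\bbL \ge X X^\intercal$ on $\bbH = \bbc^n \otimes \cH$ by testing it against an arbitrary vector. Write a generic $|\xi\rangle \in \bbH$ in block form $|\xi\rangle = \sum_{j=1}^n |j\rangle \otimes |\xi_j\rangle$ with $|\xi_j\rangle \in \cH$, where $\{|j\rangle\}$ is the standard basis of $\bbc^n$. First I would unpack the two quadratic forms directly from the definitions $\bbL_{jk} = \sum_x \htheta_j(x)\Pi_x \htheta_k(x)$ and $X_j = \sum_x \htheta_j(x)\Pi_x$, obtaining
\[
\langle\xi|\bbL|\xi\rangle = \sum_{j,k}\sum_x \htheta_j(x)\htheta_k(x)\,\langle\xi_j|\Pi_x|\xi_k\rangle,
\qquad
\langle\xi|X X^\intercal|\xi\rangle = \Big\| \sum_j X_j|\xi_j\rangle \Big\|^2,
\]
where the second identity uses that $X_j$ is Hermitian, so that the $(j,k)$ block $X_j X_k$ of $X X^\intercal$ is exactly $\hat X \hat X^\dagger$ for the map $\hat X:\cH\to\bbH$, $\hat X|\psi\rangle := \sum_j |j\rangle\otimes X_j|\psi\rangle$. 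This also makes manifest that $X X^\intercal$ is Hermitian and positive semidefinite, so the inequality is a genuine comparison of Hermitian operators despite the $X_j$ not commuting.

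Next comes the key manipulation. Since each $\Pi_x \ge 0$ it has a Hermitian square root $\Pi_x^{1/2}$. Setting $|\eta_x\rangle := \sum_j \htheta_j(x)\,\Pi_x^{1/2}|\xi_j\rangle \in \cH$ and using $\langle\xi_j|\Pi_x|\xi_k\rangle = \langle \Pi_x^{1/2}\xi_j|\Pi_x^{1/2}\xi_k\rangle$ together with $\htheta_j(x)\in\bbr$, the first quadratic form collapses to $\langle\xi|\bbL|\xi\rangle = \sum_x \|\eta_x\|^2$. For the second, factoring $\Pi_x = \Pi_x^{1/2}\Pi_x^{1/2}$ gives $\sum_j X_j|\xi_j\rangle = \sum_x \Pi_x^{1/2}|\eta_x\rangle$, hence $\langle\xi|X X^\intercal|\xi\rangle = \big\|\sum_x \Pi_x^{1/2}|\eta_x\rangle\big\|^2$. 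Thus the lemma reduces to the single inequality $\big\|\sum_x \Pi_x^{1/2}|\eta_x\rangle\big\|^2 \le \sum_x \|\eta_x\|^2$ for arbitrary families $(\eta_x)_{x\in\cX}$ in $\cH$.

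That last inequality is where the POVM normalization $\sum_x \Pi_x = I$ enters, and it is the only substantive (though mild) step: consider the ``row'' operator $M := (\Pi_x^{1/2})_{x\in\cX}$ from $\bigoplus_{x\in\cX}\cH$ to $\cH$, so that $\sum_x \Pi_x^{1/2}|\eta_x\rangle = M\,(\eta_x)_x$. Then $M M^\dagger = \sum_x \Pi_x = I_\cH$, so $M^\dagger M$ is an orthogonal projection and therefore $\|M v\|^2 = \langle v, M^\dagger M v\rangle \le \|v\|^2$; applied to $v = (\eta_x)_x$ this is exactly the claimed bound, completing the proof. Equivalently one can package the argument as a Schur-complement statement: $\sum_x \begin{pmatrix}\htheta(x)\\ I\end{pmatrix}\Pi_x\begin{pmatrix}\htheta(x)^\intercal & I\end{pmatrix} = \begin{pmatrix}\bbL & X\\ X^\intercal & I\end{pmatrix} \ge 0$ on $\bbH\oplus\cH$, and the Schur complement of the positive invertible block $I$ gives $\bbL - X X^\intercal \ge 0$. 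The continuous-outcome case is identical after replacing $\sum_x$ by $\int_\cX dx$ throughout. I do not expect a real obstacle; the care needed is purely bookkeeping of the block/tensor structure of $\bbH$ and recognizing that $X X^\intercal$ is legitimately a positive Hermitian operator.
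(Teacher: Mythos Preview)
The paper does not supply its own proof of this lemma; it is simply cited from \cite{hayashi99} as a known result. Your argument is correct: both routes you outline---the direct quadratic-form computation using $\Pi_x^{1/2}$ together with the POVM completeness $\sum_x\Pi_x=I$, and the Schur-complement reformulation via the positive block matrix $\begin{pmatrix}\bbL & X\\ X^\intercal & I\end{pmatrix}=\sum_x\begin{pmatrix}\htheta(x)\\ I\end{pmatrix}\Pi_x\begin{pmatrix}\htheta(x)^\intercal & I\end{pmatrix}\ge 0$---are standard and essentially equivalent ways to establish this inequality.
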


The weighted trace of the MSE matrix then takes of the form:
\begin{align*}
&\sfTr{\sfW(\theta) \sfV_\theta}\\
&\quad =\sum_{j,k}\sfW_{jk}(\theta)\opTr{S_\theta(\bbL_{jk}-\theta_jX_k-X_j\theta_k+\theta_j\theta_k)}\\
&\quad =\bbTr{\bbS\bbL}-\bbTr{\Mtheta X^\intercal}-\bbTr{X \MthetaT}
+\sfwtheta .
\end{align*}
In this expression, we define
\begin{align*}
\bbS&:=[\bbS_{jk}(\theta)]\ \mbox{with}\ 
\bbS_{jk}(\theta):=\sfW_{jk}(\theta) S_\theta ,\\
\Mtheta&:=[\Mthetaj]\ \mbox{with}\ 
\Mthetaj:=\sum_k\sfW_{jk}(\theta)\theta_k S_\theta,\\
\sfw(\theta)&:=\sum_{j,k}\theta_j\sfW_{jk}(\theta)\theta_k.
\end{align*}
By definition, $\bbS(\theta)=\sfW(\theta)\otimes S_\theta$ is an operator on the extended Hilbert space.
$\Mtheta$ is a column vector with Hermitian matrix elements.
These quantities are determined by the quantum statistical model and the weight matrix. 

After combining the above expressions and the integration with respect to the prior, 
we obtain the alternative form of the Bayes risk:
\begin{lemma}\label{lem:Briskrep}
The Bayes risk is expressed as
\begin{align}\nonumber
\sfR[\Pi,\htheta]&=\bbTr{\bbSbar\bbL}-\bbTr{\Mbar X^\intercal}-\bbTr{X \Mbar^\intercal}+ \sfwbar ,\\
\bbSbar&:=\int_\Theta \,d\theta\,\prior\bbS,\label{def:bayesS}\\
\Mbar&:=\int_\Theta \,d\theta\,\prior\Mtheta,\label{def:bayesT}\\
\sfwbar &:=\int_\Theta \,d\theta\,\prior\sfwtheta ,\label{def:const}
\end{align}
where $\overline{\cdot}$ denotes the averaged operators with respect to the prior distribution. 
\end{lemma}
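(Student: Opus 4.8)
The plan is to start from the per-point identity for the weighted trace of the MSE matrix established just above,
\[
\sfTr{\sfW(\theta)\sfV_\theta}=\bbTr{\bbS\bbL}-\bbTr{\Mtheta X^\intercal}-\bbTr{X\MthetaT}+\sfwtheta,
\]
and integrate both sides against the prior, using the representation \eqref{def:Brisk2} of $\sfR[\Pi,\htheta]$ on the left-hand side. The entire content of the lemma is then to push the $\theta$-average through each term on the right.

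The key observation — and the reason this reformulation is useful at all — is that the extended-space operator $\bbL=\bbL[\Pi,\htheta]$ and the vector $X=X[\Pi,\htheta]$ depend only on the decision $(\Pi,\htheta)$ and carry no dependence on the model parameter $\theta$. Hence in each of the three operator-trace terms the entire $\theta$-dependence is confined to $\bbS(\theta)=\sfW(\theta)\otimes S_\theta$ and to the operator-valued vector $\Mtheta$. Since $\cH$, and therefore $\bbH=\bbc^n\otimes\cH$, is finite dimensional, the trace $\bbTr{\cdot}$ is just a finite linear combination of matrix entries, so under the standing regularity assumptions (which make the integrals absolutely convergent) it commutes with $\int_\Theta d\theta\,\prior(\cdot)$. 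I would therefore write
\[
\int_\Theta d\theta\,\prior\bbTr{\bbS(\theta)\bbL}=\bbTr{\Bigl(\int_\Theta d\theta\,\prior\bbS(\theta)\Bigr)\bbL}=\bbTr{\bbSbar\bbL},
\]
and likewise $\int_\Theta d\theta\,\prior\bbTr{\Mtheta X^\intercal}=\bbTr{\Mbar X^\intercal}$ and $\int_\Theta d\theta\,\prior\bbTr{X\MthetaT}=\bbTr{X\Mbar^\intercal}$, while the scalar term integrates directly to $\int_\Theta d\theta\,\prior\sfwtheta=\sfwbar$. Collecting the four pieces and reading off the definitions \eqref{def:bayesS}--\eqref{def:const} gives the claimed expression.

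There is no genuinely hard step here; the two points worth stating explicitly are (i) that $\bbL$ and $X$ factor out of the $\theta$-integral because they are model-independent, and (ii) the interchange of integration and trace, which is automatic in finite dimension. An equivalent route, if one prefers to postpone the block-matrix notation, is to begin one level lower from $\sfV_{\theta,jk}=\opTr{S_\theta(\bbL_{jk}-\theta_jX_k-X_j\theta_k+\theta_j\theta_k)}$, multiply by $\sfW_{jk}(\theta)$, sum over $j,k$, and only then integrate against $\prior$; in that version the sole thing to keep straight is the bookkeeping of which block goes where in $\bbSbar$ and $\Mbar$. I would present the short version above.
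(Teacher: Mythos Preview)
Your proposal is correct and follows exactly the route the paper itself takes: it states the pointwise identity for $\sfTr{\sfW(\theta)\sfV_\theta}$ and then simply says to combine it with the integral representation \eqref{def:Brisk2} of the Bayes risk, with the observation that $\bbL$ and $X$ are $\theta$-independent being the only nontrivial point. Your write-up is in fact more explicit than the paper's, which leaves the linearity/interchange step implicit.
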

We emphasize that everything is exact so far.
We also remind ourselves that $\bbL$ and $X$ are functions of a POVM $\Pi$ and an estimator $\hat{\theta}$. 

\subsection{New Bayesian bounds}
To derive a lower bound for the Bayes risk $\sfR[\Pi,\htheta]$, we follow the same line of logic used in Ref.~\cite{lja2021}. 
Combining Lemma \ref{lem:holevo} and Lemma \ref{lem:Briskrep} gives the main result of the paper.
\begin{theorem}[Bayesian Nagaoka-Hayashi bound]\label{thm:BNHbound}
For any POVM $\Pi$ and estimator $\htheta$, the following lower bound holds for the Bayes risk. 
  \begin{align}
&\sfR[\Pi,\htheta]\ge \cNH \nonumber\\
  &\cNH:=\min_{\bbL,X}\left\{ \bbTr{\bbSbar\bbL}
  -\bbTr{\Mbar X^\intercal}-\bbTr{X \Mbar^\intercal}\right\}+ \sfwbar .
  \label{def:BNHbound}
  \end{align}
Here optimization is subject to the constraints:
  $\forall j,k,\bbL_{jk}=\bbL_{kj}$, $\bbL_{jk}$ is Hermitian, $X_j$ is Hermitian, and $\bbL\geq {X} X^\intercal$.
\end{theorem}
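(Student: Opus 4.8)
The plan is to obtain the bound directly from the two preceding lemmas, following the line of reasoning of Ref.~\cite{lja2021}. Fix an arbitrary quantum decision $(\Pi,\htheta)$; if its MSE matrix is not finite the asserted inequality is vacuous, so assume it is finite. The key observation is that the concrete operators $\bbL[\Pi,\htheta]$ and $X[\Pi,\htheta]$ associated with this decision form a \emph{feasible} point of the optimization defining $\cNH$, while Lemma~\ref{lem:Briskrep} expresses $\sfR[\Pi,\htheta]$ as precisely the objective functional of \eqref{def:BNHbound} evaluated at that point. The inequality $\sfR[\Pi,\htheta]\ge\cNH$ is then immediate from the definition of the infimum.

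Carrying this out, I would first verify feasibility. Since $X_j[\Pi,\htheta]=\sum_x\htheta_j(x)\Pi_x$ with $\htheta_j(x)\in\bbr$ and $\Pi_x\ge0$, each $X_j$ is Hermitian; likewise $\bbL_{jk}[\Pi,\htheta]=\sum_x\htheta_j(x)\htheta_k(x)\Pi_x$ is Hermitian, and the block symmetry $\bbL_{jk}=\bbL_{kj}$ holds because the real scalars $\htheta_j(x),\htheta_k(x)$ commute. The only substantive constraint, $\bbL\ge XX^\intercal$, is exactly Lemma~\ref{lem:holevo}. Hence the pair lies in the feasible set of \eqref{def:BNHbound}. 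Using Lemma~\ref{lem:Briskrep},
\begin{equation*}
\sfR[\Pi,\htheta]=\bbTr{\bbSbar\,\bbL[\Pi,\htheta]}-\bbTr{\Mbar X[\Pi,\htheta]^\intercal}-\bbTr{X[\Pi,\htheta]\,\Mbar^\intercal}+\sfwbar,
\end{equation*}
which is the objective of \eqref{def:BNHbound} at this feasible point; minimizing over all feasible $(\bbL,X)$ therefore cannot exceed it, giving $\sfR[\Pi,\htheta]\ge\cNH$.

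It remains to check that the infimum is attained, so that writing $\min$ is legitimate. On the feasible set one has $\bbL\ge XX^\intercal\ge0$, because the block operator $XX^\intercal$ sends $\sum_k e_k\otimes\phi_k\in\bbH$ to a vector of squared norm $\|\sum_k X_k\phi_k\|^2\ge0$; together with $\bbSbar=\int_\Theta d\theta\,\prior\,\sfW(\theta)\otimes S_\theta\ge0$ this makes the linear objective bounded below on the (closed) feasible region, so the extremum is achieved — the same conclusion is also evident from the semidefinite-programming reformulation discussed below. This proof is short precisely because all the analytic content sits in Lemmas~\ref{lem:holevo} and \ref{lem:Briskrep}; the only point I expect to warrant a second look is exactly this boundedness-below, i.e.\ that $\cNH>-\infty$ and is a genuine minimum rather than an infimum, which hinges on the positive semidefiniteness of $\bbSbar$ and on $\Mbar$ being compatible with it (any $u$ with $\sfW(\theta)u=0$ for all $\theta$ also satisfies $u^\intercal\Mbar=0$, so the cross terms cannot run off to $-\infty$ along directions unseen by $\bbSbar$).
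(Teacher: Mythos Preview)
Your proof is correct and follows essentially the same approach as the paper: both arguments use Lemma~\ref{lem:Briskrep} to write the Bayes risk as the objective of \eqref{def:BNHbound} evaluated at the pair $(\bbL[\Pi,\htheta],X[\Pi,\htheta])$, and Lemma~\ref{lem:holevo} to certify that this pair is feasible, whence the minimum is a lower bound. Your version is in fact slightly more direct than the paper's, which first passes through an optimal decision $(\Pi_{*},\htheta_{*})$ before reaching the same conclusion; you also add a justification that the infimum is attained, which the paper's proof leaves implicit.
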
	
\begin{proof}
Let $\bbL_{*}$ and $X_{*}$ be the optimal quantities calculated from an optimal POVM $\Pi_{*}$ and an optimal estimator $\htheta_{*}$. 
Then, the following chain of inequalities holds. 
\begin{align*}
\sfR[\Pi,\htheta]&\ge\sfR[\Pi_{*},\htheta_{*}]\nonumber\\
&=\bbTr{\bbSbar\bbL_{*}}-\bbTr{ \Mbar X_{*}^\intercal}-\bbTr{X_{*} \Mbar^\intercal}+ \sfwbar \\
  &\ge \min_{\bbL,X}\left\{ \bbTr{\bbSbar\bbL}
  -\bbTr{\Mbar X^\intercal}-\bbTr{X \Mbar^\intercal}\right\}+ \sfwbar .
\end{align*}
The first inequality follows by definition of the optimizer. 
The second line is due to Lemma \ref{lem:Briskrep}. 
To get the third line, we apply Lemma \ref{lem:holevo}.  
In the last line, optimization is subject to the constraints stated in the theorem in particular $\bbL\geq {X} X^\intercal$. 
 
\end{proof}

The main difference from the Nagaoka-Hayashi bound for the point estimation setting \cite{lja2021} is that
there is no constraint about the locally unbiasedness.
The next result shows that the proposed Bayesian Nagaoka-Hayashi bound can be computed efficiently by a semidefinite programming (SDP) problem.  
\begin{proposition}
The Bayesian Nagaoka-Hayashi bound is semidefinite programming. 
\end{proposition}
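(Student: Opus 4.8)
The plan is to recognize the optimization defining $\cNH$ in \eqref{def:BNHbound} as the minimization of an affine functional of the decision variables $(\bbL,X)$ over a feasible set cut out by finitely many affine equality constraints together with a single linear matrix inequality; this is exactly the data of a semidefinite program. The only point that requires work is converting the quadratic-looking constraint $\bbL\ge XX^\intercal$ into a linear one, which is done by a Schur-complement argument on an enlarged Hilbert space.

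First I would note that the objective $\bbTr{\bbSbar\bbL}-\bbTr{\Mbar X^\intercal}-\bbTr{X\Mbar^\intercal}+\sfwbar$ is affine in $(\bbL,X)$: the operators $\bbSbar$, $\Mbar$ and the scalar $\sfwbar$ are fixed data, determined by the model, the prior and the weight matrix through \eqref{def:bayesS}, \eqref{def:bayesT} and \eqref{def:const}. Similarly, the conditions $\bbL_{jk}=\bbL_{kj}$, the Hermiticity of each block $\bbL_{jk}$, and the Hermiticity of each $X_j$ are all real-linear relations among the entries of $\bbL$ and $X$, hence admissible equality constraints of an SDP. The remaining constraint $\bbL\ge XX^\intercal$ I would handle as follows: regarding $X=(X_1,\dots,X_n)^\intercal$ as the operator $\cH\to\bbH$ sending $v\mapsto\sum_j e_j\otimes X_j v$ for an orthonormal basis $\{e_j\}$ of $\bbc^n$, so that $X^\intercal$ is its adjoint (here one uses that each $X_j$ is Hermitian), the block operator
\[
\begin{pmatrix}\bbL & X\\ X^\intercal & I_\cH\end{pmatrix}
\]
on $\bbH\oplus\cH$ is positive semidefinite if and only if $\bbL - X\, I_\cH^{-1} X^\intercal=\bbL-XX^\intercal\ge 0$, since the $(2,2)$-block $I_\cH$ is strictly positive. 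Thus $\bbL\ge XX^\intercal$ is equivalent to a linear matrix inequality in $(\bbL,X)$.

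Putting these observations together, $\cNH$ equals the infimum of an affine functional of $(\bbL,X)$ subject to finitely many linear equalities and the above linear matrix inequality. Introducing the lifted variable $Z=\begin{pmatrix}\bbL & X\\ X^\intercal & I_\cH\end{pmatrix}$, the problem reads: minimize a linear functional of $Z$ (depending only on the $(1,1)$- and off-diagonal blocks) over $Z\ge 0$ subject to the affine constraint fixing the $(2,2)$-block to $I_\cH$ and to the block-symmetry/Hermiticity relations — this is the standard primal SDP form. If a formulation over real symmetric matrices is desired, one replaces each complex Hermitian block by its $2\times2$ real realification, which doubles the ambient dimension and preserves both the objective and the semidefinite constraint.

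I do not expect a serious obstacle here; the only care needed is the bookkeeping of the Schur complement in the extended Hilbert space, namely verifying that the block operator acts on $(\bbc^n\otimes\cH)\oplus\cH$ with $X$ and $X^\intercal$ correctly identified as mutually adjoint maps, and checking that the infimum is attained so that writing $\min$ in \eqref{def:BNHbound} is legitimate. Attainment is routine under the standing positivity assumptions: the feasible set is nonempty (the pair $(\bbL,X)$ associated with any POVM and estimator lies in it), and a coercivity estimate — using $\bbTr{\bbSbar\bbL}\ge c\,\bbTr{\bbL}\ge c\,\bbTr{XX^\intercal}$ when $\bbSbar>0$, bounding the cross term by Cauchy--Schwarz, and $0\le XX^\intercal\le\bbL$ — shows the relevant sublevel sets are compact.
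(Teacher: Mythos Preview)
Your proposal is correct and follows essentially the same approach as the paper: both rewrite the constraint $\bbL\ge XX^\intercal$ as the linear matrix inequality $\begin{pmatrix}\bbL & X\\ X^\intercal & I_\cH\end{pmatrix}\ge 0$ via a Schur complement and observe that the objective is affine in $(\bbL,X)$. You go somewhat further than the paper by spelling out the Hilbert-space bookkeeping for $X$ and $X^\intercal$ and by adding a coercivity argument to justify that the infimum is attained; the paper simply asserts the SDP form without discussing attainment.
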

\begin{proof}
To put the Bayesian Nagaoka-Hayashi bound in the SDP form, we write the first three terms of Eq.~\eqref{def:BNHbound} as
\begin{multline*}
\bbTr{\bbSbar\bbL}-\bbTr{\Mbar X^\intercal}-\bbTr{X \Mbar^\intercal}\\
= \bbTr{\left(\begin{array}{cc}\bbSbar & -\Mbar \\
 -\Mbar^\intercal & 0\end{array}\right)
 \left(\begin{array}{cc}\bbL & X \\ X^\intercal & 1\end{array}\right) }.
\end{multline*}
Clearly, this is an SDP problem, since the constraint on the variable $\bbL\ge {X} X^\intercal$ is equivalent to a positive semidefinite condition: 
\[
\left(\begin{array}{cc}\bbL & X \\ X^\intercal & 1\end{array}\right)\ge0. \] 
Other constraints on $\bbL$ and $X$ can also be put in the trace condition for the variable (see Supplementary Note 4 of Ref.~\cite{lja2021} how to implement these conditions as semidefinite programming). 
  
\end{proof}

As the Bayesian Nagaoka-Hayashi bound involves two optimization, 
we shall propose lower bounds, which are expressed as optimization over only $X$. 
When estimating two parameters, the Bayesian Nagaoka-Hayashi bound \eqref{def:BNHbound} 
will be called as the Bayesian Nagaoka bound as in point-wise estimation. 
This will be denoted as $\cN$.  
For the Bayesian Nagaoka bound, we show the following lower bound 
that is expressed as an optimization problem with respect to the variable $X$ only. 
\begin{theorem}[Lower bound for the Bayesian Nagaoka bound]
When the number of parameters is equal to two, the Bayesian Nagaoka bound $\cN$ is bounded as follows. 
  \begin{multline}\label{def:BNbound}
  \cN\ge\min_{X=(X_{1},X_{2})^\intercal}
  \big\{ \bbTr{\mathrm{sym}_{+}\left(\sqrt{\overline{\bbS}}XX^{\intercal}\sqrt{\overline{\bbS}}\right)}\\
  \hspace{3mm}+\int_\Theta d\theta\,\pi(\theta)\sqrt{\Det{\sfW(\theta)}}\bbTrAbs{S_\theta[X_1\,,\,X_2]}\\
  -\bbTr{\Mbar X^\intercal}-\bbTr{X \Mbar^\intercal}\big\}+ \sfwbar .
  \end{multline}
Here optimization is subject to: $X_1,X_2$ are Hermitian. 
\end{theorem}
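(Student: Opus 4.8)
The plan is to reduce the two-parameter Nagaoka–Hayashi minimization over the pair $(\bbL, X)$ to a minimization over $X$ alone by explicitly carrying out the inner minimization over $\bbL$ for each fixed $X$. The key observation is that in the objective $\bbTr{\bbSbar\bbL} - \bbTr{\Mbar X^\intercal} - \bbTr{X\Mbar^\intercal} + \sfwbar$, only the term $\bbTr{\bbSbar\bbL}$ depends on $\bbL$, so for fixed $X$ we need
\[
\min_{\bbL}\left\{\bbTr{\bbSbar\bbL} : \bbL_{jk}=\bbL_{kj}\ \text{Hermitian},\ \bbL\ge XX^\intercal\right\}.
\]
First I would recall the pointwise identity from Ref.~\cite{lja2021}: for $n=2$, when minimizing $\bbTr{\bbS\bbL}$ over symmetric-Hermitian $\bbL\ge XX^\intercal$ with $\bbS=\sfW\otimes S$, the minimum equals $\bbTr{\mathrm{sym}_+(\sqrt{\bbS}\,XX^\intercal\sqrt{\bbS})} + \sqrt{\Det{\sfW}}\,\bbTrAbs{S[X_1,X_2]}$, where $\mathrm{sym}_+$ denotes the symmetrized/positive-part construction used there. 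The extra $\bbTrAbs$ term with the determinant is the genuinely two-dimensional phenomenon: the constraint $\bbL_{12}=\bbL_{21}$ together with Hermiticity forces the off-diagonal block to be \emph{both} symmetric under swapping the $\bbc^2$ indices and Hermitian on $\cH$, which costs exactly the antisymmetric (commutator) contribution $[X_1,X_2]$.

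The main obstacle is that here $\bbSbar = \int\prior\,\sfW(\theta)\otimes S_\theta\,d\theta$ is \emph{not} of product form $\sfW\otimes S$, so the pointwise formula does not apply directly to $\bbSbar$. The natural route is to interchange the $\bbL$-minimization with the prior integral: lower-bound $\bbTr{\bbSbar\bbL} = \int\prior\,\bbTr{(\sfW(\theta)\otimes S_\theta)\bbL}\,d\theta$ by integrating the pointwise minimizations, i.e.
\[
\min_{\bbL\ge XX^\intercal}\bbTr{\bbSbar\bbL}\ \ge\ \int_\Theta d\theta\,\prior\min_{\bbL\ge XX^\intercal}\bbTr{(\sfW(\theta)\otimes S_\theta)\bbL},
\]
which holds because a single $\bbL$ feasible for the left side is feasible for every integrand on the right (the feasible set $\{\bbL\ge XX^\intercal\}$ is independent of $\theta$), so the left-side value is at least the pointwise infimum for each $\theta$, hence at least the integral. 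Then I apply the $n=2$ pointwise formula inside the integral: the first-term part $\bbTr{\mathrm{sym}_+(\sqrt{\sfW(\theta)\otimes S_\theta}\,XX^\intercal\sqrt{\sfW(\theta)\otimes S_\theta})}$ integrates — after using $\int\prior\,\sfW(\theta)\otimes S_\theta\,d\theta=\bbSbar$ and linearity of the relevant trace expression in $\bbS$, or by a further bound — to $\bbTr{\mathrm{sym}_+(\sqrt{\bbSbar}\,XX^\intercal\sqrt{\bbSbar})}$, while the determinant/commutator part integrates to $\int\prior\sqrt{\Det{\sfW(\theta)}}\,\bbTrAbs{S_\theta[X_1,X_2]}\,d\theta$. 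Care is needed on whether the first term genuinely collapses to the $\sqrt{\bbSbar}$ expression or only admits a bound; since $\bbTr{\mathrm{sym}_+(\sqrt{\bbS}XX^\intercal\sqrt{\bbS})}$ is, by its construction in \cite{lja2021}, the minimum of $\bbTr{\bbS\bbL}$ over $\bbL\ge XX^\intercal$ \emph{without} the symmetry-on-$\bbc^2$ constraint on $\bbL$'s off-diagonal block being the only extra cost — equivalently it equals $\bbTr{\bbS XX^\intercal}$ plus a correction that is concave, hence superadditive, under averaging $\bbS$ — one gets $\int\prior\,\bbTr{\mathrm{sym}_+(\sqrt{\sfW(\theta)\otimes S_\theta}\cdots)}\ge\bbTr{\mathrm{sym}_+(\sqrt{\bbSbar}\cdots)}$, which is the direction we need.

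Finally, I would assemble: adding back the $\theta$-independent terms $-\bbTr{\Mbar X^\intercal}-\bbTr{X\Mbar^\intercal}+\sfwbar$ and then taking $\min_X$ over Hermitian $X_1,X_2$ yields exactly the right-hand side of \eqref{def:BNbound}, giving $\cN\ge$ that quantity. I expect the delicate points to be (i) justifying the pushing of the $\bmath{\bbL}$-minimization through the integral (handled by $\theta$-independence of the constraint set) and (ii) the behaviour of the $\mathrm{sym}_+$ term under averaging the weight — verifying it is the concavity/superadditivity of $\bbS\mapsto\min_{\bbL\ge XX^\intercal}\bbTr{\bbS\bbL}$ (an infimum of linear functionals of $\bbS$, hence concave) that makes the bound go the right way; the commutator term is linear in $S_\theta$ inside $\bbTrAbs$ but the $\sqrt{\Det{\sfW}}$ weight and the absolute-value/eigenvalue structure mean it stays under the integral rather than simplifying, which is why the stated bound keeps an explicit $\int_\Theta d\theta$ there.
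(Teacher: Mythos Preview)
Your approach is the paper's: the proof there invokes the appendix lemma that (i) pushes the $\bbL$-minimization through the prior mixture $\bbSbar=\int\prior\,\sfW(\theta)\otimes S_\theta\,d\theta$, (ii) applies the pointwise two-parameter identity to each product term $\sfW(\theta)\otimes S_\theta$, and (iii) recombines the symmetric-part contributions into the single $\sqrt{\bbSbar}$ expression by linearity and trace cyclicity --- exactly the route you outline.

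The one point to straighten out is your hedge on the $\mathrm{sym}_+$ term. That step is an \emph{equality}, not merely a bound: since the full trace is invariant under partial transpose on $\bbc^2$, one has $\bbTr{\mathrm{sym}_+\!\left(\sqrt{\bbS}\,XX^\intercal\sqrt{\bbS}\right)}=\bbTr{\sqrt{\bbS}\,XX^\intercal\sqrt{\bbS}}=\bbTr{\bbS\,XX^\intercal}$, which is manifestly linear in $\bbS$; hence the integral over $\theta$ gives $\bbTr{\bbSbar\,XX^\intercal}=\bbTr{\mathrm{sym}_+\!\left(\sqrt{\bbSbar}\,XX^\intercal\sqrt{\bbSbar}\right)}$ exactly. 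Your concavity fallback is both unnecessary and stated with the wrong sign: for a concave $g$, Jensen yields $\int\prior\,g(\bbS_\theta)\,d\theta\le g(\bbSbar)$, the reverse of the inequality you wrote, so if the term were genuinely strictly concave the argument would break at that step. It is precisely the linearity --- which you do mention first --- that makes the recombination work, and this is how the paper handles it.
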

In this theorem, $\mathrm{sym}_{+}(\mathbb{A}):= \frac12 (\mathbb{A}+ \mathbb{A}^{\intercal})$ denotes the symmetrized  matrix with respect to the first Hilbert space of the extended Hilbert space $\mathbb{H}=\bbc^n\otimes\cH$, i.e., 
for $\mathbb{A}=[\mathbb{A}_{jk}]\in\mathbb{H}^{dn\times dn}$, $[\mathrm{sym}_{+}(\mathbb{A})]_{jk}=(\mathbb{A}_{jk}+\mathbb{A}_{kj})/2$. 
\begin{proof}
This theorem is proven by using Lemma \ref{lem:h2} in Appendix.  
\end{proof}

We next turn our attention to the Bayesian Holevo-type bound which is in general lower than
the Bayesian Nagaoka-Hayashi bound.
\begin{theorem}[Bayesian Holevo-type bound]\label{thm:BHbound}
  \begin{multline}\label{def:BHbound}
  \cNH\ge \cHo\\
  \cHo:=
 \min_{X_j:\mathrm{Hermitian}}\big\{ \sfTr{\Re \sfZ[\overline{\bbS},X]}\\
  \hspace{20mm}+\int_\Theta d\theta\,\pi(\theta)\sfTrAbs{\Im \sfZ[\bbS(\theta),X]}\\
 -\opTr{\Mbar^\intercal X}-\opTr{X^\intercal\Mbar}\big\}+ \sfwbar ,
\end{multline}
where $\sfZ[\bbS,X]$ is an $n\times n$ Hermitian matrix is defined by
\[
\sfZ[\bbS,X]:= \mathrm{Tr}_\cH\left\{\sqrt{\bbS}XX^\intercal\sqrt{\bbS}\right\}.
\]
\end{theorem}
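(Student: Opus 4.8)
The plan is to take an arbitrary pair $(\bbL,X)$ that is feasible for the minimization defining $\cNH$ in Theorem~\ref{thm:BNHbound} and to eliminate the block‑matrix variable $\bbL$, keeping only $X$. For every $\theta\in\Theta$ the operator $\bbS(\theta)=\sfW(\theta)\otimes S_\theta$ is positive semidefinite, so conjugating the feasibility constraint $\bbL\ge XX^\intercal$ on $\bbH=\bbc^n\otimes\cH$ by $\sqrt{\bbS(\theta)}$ and then applying the (positive) partial trace $\mathrm{Tr}_\cH$ gives the $n\times n$ matrix inequality
\[
\mathsf{G}(\theta):=\mathrm{Tr}_\cH\left\{\sqrt{\bbS(\theta)}\,\bbL\,\sqrt{\bbS(\theta)}\right\}\ \ge\ \sfZ[\bbS(\theta),X],
\]
where $\sfZ[\bbS(\theta),X]$ is Hermitian because $XX^\intercal$ is Hermitian on $\bbH$ whenever each $X_j$ is. Writing $\sqrt{\bbS(\theta)}=\sqrt{\sfW(\theta)}\otimes\sqrt{S_\theta}$, a short computation using the real symmetry of $\sqrt{\sfW(\theta)}$, the constraint $\bbL_{jk}=\bbL_{kj}$, and the Hermiticity of $\bbL_{jk}$ and $S_\theta$ shows that $\mathsf{G}(\theta)$ has real entries and is symmetric.

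The next step is the elementary fact that whenever a real symmetric matrix $\mathsf{G}$ dominates a Hermitian matrix $\mathsf{Z}$ in the Loewner order, $\sfTr{\mathsf{G}}\ge\sfTr{\Re\mathsf{Z}}+\sfTrAbs{\Im\mathsf{Z}}$. Indeed, transposing $\mathsf{G}\ge\mathsf{Z}$ gives $\mathsf{G}=\mathsf{G}^\intercal\ge\mathsf{Z}^\intercal=\overline{\mathsf{Z}}$, hence $\mathsf{G}-\Re\mathsf{Z}\ge\pm\,\mathrm{i}\,\Im\mathsf{Z}$; in particular $\mathsf{G}-\Re\mathsf{Z}\ge0$, and evaluating it on an orthonormal eigenbasis of the Hermitian matrix $\mathrm{i}\,\Im\mathsf{Z}$ produces $\sfTr{\mathsf{G}}-\sfTr{\Re\mathsf{Z}}\ge\sfTrAbs{\Im\mathsf{Z}}$ (this is exactly the passage that yields the Holevo bound from the Nagaoka--Hayashi bound in the point-estimation setting, cf.\ Ref.~\cite{lja2021}). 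Applied with $\mathsf{G}=\mathsf{G}(\theta)$ and $\mathsf{Z}=\sfZ[\bbS(\theta),X]$ it gives, for each $\theta$,
\[
\sfTr{\mathsf{G}(\theta)}\ \ge\ \sfTr{\Re\sfZ[\bbS(\theta),X]}+\sfTrAbs{\Im\sfZ[\bbS(\theta),X]}.
\]

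I would then integrate this against the prior $\pi(\theta)$. Using $\sfTr{\mathsf{G}(\theta)}=\bbTr{\bbS(\theta)\bbL}$ (cyclicity of the trace) and the definition of $\bbSbar$, the left side integrates to $\bbTr{\bbSbar\bbL}$. The key point is that $\sfTr{\Re\sfZ[\bbS,X]}=\bbTr{\bbS\,XX^\intercal}$ is \emph{linear} in $\bbS$, so $\int_\Theta d\theta\,\pi(\theta)\,\sfTr{\Re\sfZ[\bbS(\theta),X]}=\bbTr{\bbSbar\,XX^\intercal}=\sfTr{\Re\sfZ[\bbSbar,X]}$, whereas the $\Im$-term depends on $\sqrt{\bbS(\theta)}$ nonlinearly and must remain inside the integral. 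This yields
\[
\bbTr{\bbSbar\bbL}\ \ge\ \sfTr{\Re\sfZ[\bbSbar,X]}+\int_\Theta d\theta\,\pi(\theta)\,\sfTrAbs{\Im\sfZ[\bbS(\theta),X]}.
\]
Adding the $\bbL$-independent affine terms $-\bbTr{\Mbar X^\intercal}-\bbTr{X\Mbar^\intercal}+\sfwbar$ to both sides, the left side is exactly the $\cNH$-objective at $(\bbL,X)$ and the right side is the $\cHo$-objective at $X$; since feasibility already forces each $X_j$ Hermitian, the latter is $\ge\cHo$, and taking the infimum over all feasible $(\bbL,X)$ on the left gives $\cNH\ge\cHo$.

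The affine bookkeeping and the real-symmetry check for $\mathsf{G}(\theta)$ are routine. The crux — and the reason $\cHo$ has the asymmetric form stated — is that $\Re\sfZ$ is linear in $\bbS$ and therefore collapses to the single matrix $\Re\sfZ[\bbSbar,X]$ under the prior average, while $\Im\sfZ$ is not and must be averaged only after taking $\sfTrAbs{\,\cdot\,}$; one also has to check that the definition of $\sfTrAbs{\,\cdot\,}$ (the sum of the moduli of the eigenvalues) behaves as expected for the real antisymmetric matrix $\Im\sfZ[\bbS(\theta),X]$, whose eigenvalues occur in conjugate imaginary pairs.
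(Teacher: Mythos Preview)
Your argument is correct. You conjugate the feasibility constraint $\bbL\ge XX^\intercal$ by $\sqrt{\bbS(\theta)}$, partial-trace over $\cH$, observe that the resulting $\mathsf{G}(\theta)$ is real symmetric (thanks to $\bbL_{jk}=\bbL_{kj}$ Hermitian and $\sqrt{\sfW(\theta)}$ real symmetric), apply the ``real symmetric $\ge$ Hermitian $\Rightarrow$ $\sfTr{\mathsf{G}}\ge\sfTr{\Re\mathsf{Z}}+\sfTrAbs{\Im\mathsf{Z}}$'' inequality directly, and then integrate, exploiting the linearity $\sfTr{\Re\sfZ[\bbS,X]}=\bbTr{\bbS\,XX^\intercal}$ to collapse the real-part term to $\bbSbar$. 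All steps check out.

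The paper's route is different in organization, though the underlying inequality is the same. It proceeds through the appendix chain Lemma~\ref{lem:h0}\,$\to$\,Lemma~\ref{lem:h1}\,$\to$\,Lemma~\ref{lem:h_indep}\,$\to$\,Lemma~\ref{lem:h3}: first it lower-bounds the $\bbL$-minimization by splitting the mixture $\bbSbar=\int\pi(\theta)\bbS(\theta)$ and minimizing each component separately; then, for each tensor-product $\bbS(\theta)=\sfW(\theta)\otimes S_\theta$, it performs the change of variable $\bbL=\mathrm{sym}_+(XX^\intercal)+\bbS(\theta)^{-1/2}\bbV\bbS(\theta)^{-1/2}$ (this is where the paper needs $\bbS(\theta)$ strictly positive) before partial-tracing and invoking Holevo's Lemma~6.6.1. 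Your argument bypasses the change of variable entirely by keeping $\bbL$ fixed and bounding the objective directly, which is more elementary and does not require invertibility of $\bbS(\theta)$. The paper's decomposition, on the other hand, isolates the intermediate optimization $f(\bbS,\bbX)$ as a standalone object (useful for the companion Lemmas~\ref{lem:2para_indep} and~\ref{lem:h2}), at the cost of a longer chain of reductions.
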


In the Bayesian Holevo-type bound, optimization is subject to $X_j$: Hermitian for all $j$.
This is in contrast to point estimation under the locally unbiasedness condition.
\begin{proof}
This theorem is due to Lemma \ref{lem:h3} in Appendix.  
\end{proof}

\subsection{Parameter-independent weight matrix}
To make a comparison to existing lower bounds in the literature, 
we set the weight matrix to be parameter independent.
Then, quantities \eqref{def:bayesS}, \eqref{def:bayesT}, and \eqref{def:const} are expressed as
\begin{align*}
\bbSbar&=\sfW\otimes \rhoB,\\
\Mbarj&=\sum_k\sfW_{jk}\MBk,\\
\sfwbar &=\sfTr{\sfW\muTwo},
\end{align*}
where $\rhoB=\int_\Theta \,d\theta\,\prior S_\theta$
and $\MBj=\int_\Theta \,d\theta\,\prior \theta_j S_\theta$ as before. 

In this case, $\bbSbar$ exhibits a tensor product structure as $\overline{\bbS}=\sfW\otimes \rhoB$. 
We then apply Lemma \ref{lem:h_indep} to obtain the Bayesian Holevo-type bound as follows. 
\begin{corollary}
For a parameter independent weight matrix, the Bayesian Holevo-type bound is
\begin{multline}\label{def:BHbound2}
\cHo=\min_{X_j:\mathrm{Hermitian}}\big\{ \sfTr{\sfW\,\Re \sfZB[X]}+\sfTrAbs{\sfW\,\Im \sfZB[X]}\\
-\sfTr{\sfW\sfHB[X]}-\sfTr{\sfW\sfHB[X]^\intercal}\big\}+ \sfwbar ,
\end{multline}
where $\sfZB[X]$ and $\sfHB[X]$ are
$n\times n$ matrices defined by
$\sfZ_{\mathrm{B},jk}[X]:= \opTr{\rhoB X_kX_j}$
and $\sfH_{\mathrm{B},jk}[X]:=\opTr{\MBj X_k}$, respectively.
\end{corollary}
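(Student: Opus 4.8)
The plan is to specialize the general Bayesian Holevo-type bound of Theorem~\ref{thm:BHbound} to the case $\overline{\bbS}=\sfW\otimes\rhoB$ and $\bbS(\theta)=\sfW\otimes S_\theta$, and then reorganize the matrix $\sfZ[\bbS,X]$ so that the weight matrix $\sfW$ factors out of the partial trace over $\cH$. First I would invoke the auxiliary lemma referenced in the excerpt (Lemma~\ref{lem:h_indep} in the Appendix), whose role is precisely to handle the tensor-product structure: when $\bbS=\sfW\otimes\sigma$ for a state $\sigma$, one has $\sqrt{\bbS}=\sqrt{\sfW}\otimes\sqrt{\sigma}$, and a direct block-wise computation gives
\[
\sfZ_{jk}[\sfW\otimes\sigma,X]
=\sum_{a,b}\sqrt{\sfW}_{ja}\,\sqrt{\sfW}_{bk}\,\opTr{\sqrt{\sigma}\,X_b X_a\,\sqrt{\sigma}}
=\bigl[\sqrt{\sfW}\,\widetilde{\sfZ}_\sigma[X]\,\sqrt{\sfW}\bigr]_{jk},
\]
where $\widetilde{\sfZ}_{\sigma,ab}[X]:=\opTr{\sigma\,X_b X_a}$ after using cyclicity of the trace. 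Hence $\sfZ[\overline{\bbS},X]=\sqrt{\sfW}\,\sfZB[X]\,\sqrt{\sfW}$ with $\sfZ_{\mathrm{B},jk}[X]=\opTr{\rhoB X_k X_j}$, and similarly $\sfZ[\bbS(\theta),X]=\sqrt{\sfW}\,\sfZ_\theta[X]\,\sqrt{\sfW}$ with a $\theta$-dependent analogue.

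Next I would push the congruence by $\sqrt{\sfW}$ through the $\sfTr$ and $\sfTrAbs$ operations. For the real part, cyclicity of $\sfTr$ over the $n$-dimensional space gives directly $\sfTr{\Re\sfZ[\overline{\bbS},X]}=\sfTr{\sqrt{\sfW}\,\Re\sfZB[X]\,\sqrt{\sfW}}=\sfTr{\sfW\,\Re\sfZB[X]}$, since $\Re$ commutes with conjugation by the real symmetric matrix $\sqrt{\sfW}$. For the imaginary part I use the key identity for the absolute trace: for a Hermitian matrix $A$ and any matrix $B$, $\sfTrAbs{B A B^{\intercal}}$ relates to $\sfTrAbs{A}$ only after the right factor multiplies back — here the relevant fact is $\sfTrAbs{\sqrt{\sfW}\,(\Im\sfZ_\theta[X])\,\sqrt{\sfW}}=\sfTrAbs{\sfW\,\Im\sfZ_\theta[X]}$, which holds because $\sqrt{\sfW}\,A\,\sqrt{\sfW}$ and $\sfW A$ are similar (conjugate by $\sqrt{\sfW}$, invertible when $\sfW>0$; the boundary case $\sfW\ge0$ follows by continuity), and $\sfTrAbs$ is the sum of absolute values of eigenvalues, which is similarity-invariant. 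After integrating over $\theta$ and noting that the $\theta$-dependence of $\sfZ_\theta$ enters only through $S_\theta$, the integral $\int d\theta\,\prior\,\sfZ_{\theta,jk}[X]=\opTr{\rhoB X_k X_j}=\sfZ_{\mathrm{B},jk}[X]$ collapses the $\theta$-integral in the imaginary-part term to $\sfTrAbs{\sfW\,\Im\sfZB[X]}$ — but I must be careful: $\sfTrAbs$ is \emph{not} linear, so this collapse requires that $\Im\sfZ_\theta[X]$ be (a nonnegative multiple of) a fixed matrix, or else I should keep the bound in the stated form where the $\theta$-integral has already been carried out in Lemma~\ref{lem:h_indep}. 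I expect Lemma~\ref{lem:h_indep} is stated exactly so that this step is legitimate, i.e. it establishes $\int d\theta\,\prior\,\sfTrAbs{\Im\sfZ[\sfW\otimes S_\theta,X]}=\sfTrAbs{\sfW\,\Im\sfZB[X]}$ under the parameter-independent weight.

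Finally I would rewrite the linear terms: $\opTr{\Mbar^{\intercal}X}=\sum_j\opTr{\Mbarj X_j}=\sum_{j,k}\sfW_{jk}\opTr{\MBk X_j}=\sfTr{\sfW\,\sfHB[X]^{\intercal}}$ using $\Mbarj=\sum_k\sfW_{jk}\MBk$ and $\sfH_{\mathrm{B},jk}[X]=\opTr{\MBj X_k}$, and symmetrically $\opTr{X^{\intercal}\Mbar}=\sfTr{\sfW\,\sfHB[X]}$; and the constant is $\sfwbar=\sfTr{\sfW\muTwo}$ as recorded just above the Corollary. Substituting all four reorganized pieces into Eq.~\eqref{def:BHbound} yields Eq.~\eqref{def:BHbound2} verbatim. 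The main obstacle is the $\sfTrAbs$ term: because the absolute trace is only subadditive, not additive, one cannot naively commute it with the $\theta$-integral, so the whole argument hinges on the precise statement of Lemma~\ref{lem:h_indep} (equivalently, on the fact that conjugation by $\sqrt{\sfW}$ preserves the spectrum and hence $\sfTrAbs$, and that the partial trace over $\cH$ of the $\theta$-integrand is exactly $\rhoB$-weighted); everything else is bookkeeping with cyclicity of traces and the tensor-product factorization of $\sqrt{\sfW\otimes S_\theta}$.
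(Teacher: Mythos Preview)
Your plan has a genuine gap at exactly the point you yourself flagged. Starting from Theorem~\ref{thm:BHbound}, the absolute-trace term reads
\[
\int_\Theta d\theta\,\pi(\theta)\,\sfTrAbs{\Im\sfZ[\sfW\otimes S_\theta,X]}
=\int_\Theta d\theta\,\pi(\theta)\,\sfTrAbs{\sfW\,\Im\sfZ_\theta[X]},
\]
and this is in general \emph{strictly larger} than $\sfTrAbs{\sfW\,\Im\sfZB[X]}$ by convexity of $\sfTrAbs{\cdot}$; they coincide only when all the matrices $\Im\sfZ_\theta[X]$ lie on a common ray, which nothing in the hypotheses guarantees. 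Lemma~\ref{lem:h_indep} does not rescue this: it is a statement about a \emph{single} tensor product $\bbS=\sfW\otimes S$ and says nothing about commuting a $\theta$-average past $\sfTrAbs{\cdot}$. So specializing Eq.~\eqref{def:BHbound} and then collapsing the integral gives only an inequality, not the equality asserted in Eq.~\eqref{def:BHbound2}.

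The paper's route avoids Theorem~\ref{thm:BHbound} entirely. Because the weight is $\theta$-independent, the averaged operator $\overline{\bbS}=\sfW\otimes\rhoB$ is itself a simple tensor product, so one can apply Lemma~\ref{lem:h_indep} \emph{directly} to the minimization over $\bbL$ in the Nagaoka--Hayashi bound (Theorem~\ref{thm:BNHbound}) with $\bbS=\sfW\otimes\rhoB$ and $\bbX=XX^\intercal$. That lemma produces $\sfTr{\Re\sfZ(\sfW\otimes\rhoB,XX^\intercal)}+\sfTrAbs{\Im\sfZ(\sfW\otimes\rhoB,XX^\intercal)}$ in one shot, with no $\theta$-integral inside the absolute trace; your computations of $\sfZ(\sfW\otimes\sigma,XX^\intercal)=\sqrt{\sfW}\,\sfZ_\sigma[X]\,\sqrt{\sfW}$ and of the linear terms then finish the job. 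In other words, the Corollary is not a specialization of Theorem~\ref{thm:BHbound} but a parallel (and in fact tighter) Holevo-type bound obtained by exploiting the product structure of $\overline{\bbS}$ before bounding, rather than after.
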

Note that the first two terms of this Bayesian Holevo-type bound takes the same form as the point-wise version. One obtains these terms by replacing the state by the averaged state $\rhoB$. 
However, the difference appears as the contribution through the matrix $\sfHB[X]$. 

When the number of parameters is two, the Bayesian Nagaoka bound can take the following explicit form 
by applying Lemma \ref{lem:2para_indep} in Appendix. 
\begin{corollary}
For a two-parameter estimation with a parameter independent weight matrix, the Bayesian Nagaoka bound is 
expressed as 
\begin{multline}\label{def:BNbound2}
\cN=\min_{X_j:\mathrm{Hermitian}}\big\{ \sfTr{\sfW\Re \sfZB[X]}\\
\hspace{2cm}+\sqrt{\Det{\sfW}}\,\opTrAbs{\rhoB(X_{1}X_{2}-X_{2}X_{1})}\\ 
-\sfTr{\sfW\sfHB[X]}-\sfTr{\sfW\sfHB[X]^\intercal}\big\}+ \sfwbar .
\end{multline}
\end{corollary}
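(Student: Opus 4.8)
The plan is to derive the explicit two-parameter formula \eqref{def:BNbound2} from the general Bayesian Nagaoka bound \eqref{def:BNbound} by specializing the weight matrix to be parameter independent. The starting point is \eqref{def:BNbound}, which already expresses $\cN$ (recall $\cN$ is the $n=2$ case of $\cNH$, and the stated bound in \eqref{def:BNbound} is actually shown to be an \emph{equality} for two parameters once one invokes Lemma~\ref{lem:h2}, since the lower bound there is achieved). So what remains is to evaluate each of the four pieces of \eqref{def:BNbound} under $\sfW(\theta)\equiv\sfW$, $\pi(\theta)$ arbitrary, using the product structure $\overline{\bbS}=\sfW\otimes\rhoB$ recorded just above the corollary.

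First I would handle the quadratic term $\bbTr{\mathrm{sym}_+(\sqrt{\overline{\bbS}}\,XX^\intercal\sqrt{\overline{\bbS}})}$. Because $\overline{\bbS}=\sfW\otimes\rhoB$ factorizes, $\sqrt{\overline{\bbS}}=\sqrt{\sfW}\otimes\sqrt{\rhoB}$, and one computes block-wise that $[\sqrt{\overline{\bbS}}\,XX^\intercal\sqrt{\overline{\bbS}}]_{jk}=\sum_{l,m}\sqrt{\sfW}_{jl}\sqrt{\sfW}_{mk}\,\sqrt{\rhoB}\,X_lX_m\sqrt{\rhoB}$; taking $\mathrm{Tr}_\cH$ and then the outer trace, the cyclicity of $\opTr{\cdot}$ collapses this to $\sfTr{\sqrt{\sfW}\,\sfZ_{\mathrm B}[X]\,\sqrt{\sfW}}$ where $\sfZ_{\mathrm{B},lm}[X]=\opTr{\rhoB X_mX_l}$, and the symmetrization $\mathrm{sym}_+$ together with the trace turns this into $\sfTr{\sfW\,\Re\sfZ_{\mathrm B}[X]}$ — this is exactly Lemma~\ref{lem:h_indep} applied to the first term, so I would just cite it. Next the commutator term: since $\sfW$ is now constant, $\int d\theta\,\pi(\theta)\sqrt{\Det{\sfW(\theta)}}\,\bbTrAbs{S_\theta[X_1,X_2]}=\sqrt{\Det{\sfW}}\int d\theta\,\pi(\theta)\,\bbTrAbs{S_\theta[X_1,X_2]}$, and I would invoke Lemma~\ref{lem:2para_indep} to pull the integral inside, giving $\sqrt{\Det{\sfW}}\,\opTrAbs{\rhoB(X_1X_2-X_2X_1)}$ — this is the step that genuinely uses the two-parameter structure (the $\sqrt{\Det{\sfW}}$ prefactor is the $2\times 2$ artifact) and where linearity of $\rhoB=\int d\theta\,\pi(\theta)S_\theta$ against the fixed operator $[X_1,X_2]$ is the crucial simplification. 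Finally, the linear terms: $\bbTr{\Mbar X^\intercal}=\sum_j\opTr{\Mbar_jX_j}$ and, using $\Mbar_j=\sum_k\sfW_{jk}\MBk$, this equals $\sum_{j,k}\sfW_{jk}\opTr{\MBk X_j}=\sfTr{\sfW\sfH_{\mathrm B}[X]^\intercal}$ with $\sfH_{\mathrm{B},jk}[X]=\opTr{\MBj X_k}$; the conjugate term gives $\sfTr{\sfW\sfH_{\mathrm B}[X]}$, and $\sfwbar=\sfTr{\sfW\muTwo}$ is already recorded. Substituting all four evaluations into \eqref{def:BNbound} yields \eqref{def:BNbound2}.

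The main obstacle I anticipate is bookkeeping rather than anything deep: correctly tracking the order of operators inside traces under the transpose convention on $\bbc^n$ (the paper writes $XX^\intercal$ with $X$ a column vector of operators, so the $(l,m)$ block is $X_lX_m$, and one must be careful that $\sfZ_{\mathrm{B},jk}[X]=\opTr{\rhoB X_kX_j}$ has the indices in that specific order, matching the corollary's definition), and making sure the symmetrization $\mathrm{sym}_+$ combined with taking the real part of $\sfZ_{\mathrm B}$ is consistent — i.e. that $\bbTr{\mathrm{sym}_+(\cdots)}$ really does produce $\Re\sfZ_{\mathrm B}$ and not just $\sfZ_{\mathrm B}$, which works precisely because $\sfTr{\sfW\,\cdot\,}$ of a Hermitian-transpose-symmetric combination extracts the real symmetric part when $\sfW$ is real symmetric. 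Since the excerpt explicitly says the proof is ``by applying Lemma~\ref{lem:2para_indep} in Appendix,'' I would keep the write-up short: state that the product structure $\overline{\bbS}=\sfW\otimes\rhoB$ lets Lemma~\ref{lem:h_indep} rewrite the first term and the linear terms as above, that Lemma~\ref{lem:2para_indep} handles the commutator term with the $\sqrt{\Det\sfW}$ prefactor factored out by constancy of $\sfW$, and that collecting terms in \eqref{def:BNbound} gives \eqref{def:BNbound2}.
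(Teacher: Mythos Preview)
There is a genuine gap in your argument, and it stems from taking \eqref{def:BNbound} as the starting point rather than the original definition \eqref{def:BNHbound}.

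First, your claim that \eqref{def:BNbound} is an \emph{equality} because ``the lower bound [in Lemma~\ref{lem:h2}] is achieved'' is unsupported: Lemma~\ref{lem:h2} is stated and proved only as an inequality $f(\bbS,\bbX)\ge f_4(\bbS,\bbX)$, obtained through Lemma~\ref{lem:h0}, where one replaces a single minimization by a convex combination of separate minimizations. That step is in general strict. Second, and more seriously, the step in which you ``pull the integral inside'' to go from $\int d\theta\,\pi(\theta)\,\opTrAbs{S_\theta[X_1,X_2]}$ to $\opTrAbs{\rhoB[X_1,X_2]}$ is invalid: $\opTrAbs{\cdot}$ is the trace norm, which is convex but not linear, so at best one has the inequality $\opTrAbs{\rhoB[X_1,X_2]}\le \int d\theta\,\pi(\theta)\,\opTrAbs{S_\theta[X_1,X_2]}$, not equality. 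Lemma~\ref{lem:2para_indep} does not justify this move; it says nothing about averaging states.

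The paper's route avoids both issues by going back one step. For a parameter-independent weight, the averaged operator $\overline{\bbS}=\sfW\otimes\rhoB$ is \emph{itself} a tensor product, so Lemma~\ref{lem:2para_indep} applies directly to the inner minimization over $\bbL$ in \eqref{def:BNHbound}: for fixed $X$, $\min_{\bbL}\bbTr{\overline{\bbS}\,\bbL}$ subject to $\bbL\ge XX^\intercal$ equals $f(\sfW\otimes\rhoB,\,XX^\intercal)$, and Lemma~\ref{lem:2para_indep} evaluates this exactly, producing the term $\sqrt{\Det{\sfW}}\,\opTrAbs{\rhoB(X_1X_2-X_2X_1)}$ with $\rhoB$ already in place. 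No integral over $\theta$ ever appears in the commutator term, so there is nothing to pull inside. Your treatment of the remaining terms (the quadratic piece rewritten as $\sfTr{\sfW\,\Re\sfZB[X]}$ and the linear pieces as $\sfTr{\sfW\sfHB[X]}$, $\sfTr{\sfW\sfHB[X]^\intercal}$) is fine and coincides with what the direct route gives.
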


\subsection{Relation to the Bayesian quantum Cram\'{e}r-Rao bounds}
We claim the proposed Bayesian Nagaoka-Hayashi bound is tighter than the Bayesian quantum Cram\'{e}r-Rao bounds. 
To show this we prove that Bayesian Holevo-type bound \eqref{def:BHbound2} is larger than the Bayesian quantum Cram\'{e}r-Rao bounds. 
\subsubsection{Relation to the Bayesian SLD Cram\'{e}r-Rao bound}
First, we have the following statement. 
\begin{proposition}
$\cHo\ge\cSLD$, and hence, $\cNH\ge\cSLD$.
\end{proposition}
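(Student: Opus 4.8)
The plan is to prove the stronger statement $\cHo\ge\cSLD$ in the parameter-independent weight setting — so that $\cNH\ge\cSLD$ follows immediately from Theorem~\ref{thm:BHbound} — by the Bayesian analogue of the familiar ``Holevo dominates SLD'' argument applied to \eqref{def:BHbound2}: discard the nonnegative $\mathrm{TrAbs}$-term, rewrite the surviving quadratic functional of the Hermitian tuple $X=(X_1,\dots,X_n)^\intercal$ through the symmetrized inner product $\langle\cdot,\cdot\rangle_{\rhoB}$, and complete the square about the Bayesian SLD operators $L_j$.

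First I would note $\sfTrAbs{\sfW\,\Im\sfZB[X]}\ge0$ for every Hermitian $X$, since $\sfTrAbs$ is by definition a sum of absolute values of eigenvalues. Dropping this term in \eqref{def:BHbound2} gives $\cHo\ge\min_{X_j:\mathrm{Hermitian}}f(X)$ with $f(X):=\sfTr{\sfW\,\Re\sfZB[X]}-\sfTr{\sfW\sfHB[X]}-\sfTr{\sfW\sfHB[X]^\intercal}+\sfwbar$. Next I would convert each term: Hermiticity of $\rhoB$ and of the $X_j$ gives $\sfTr{\sfW\,\Re\sfZB[X]}=\sum_{jk}\sfW_{jk}\langle X_j,X_k\rangle_{\rhoB}$; the Bayesian SLD equation $\MBj=\tfrac12(\rhoB L_j+L_j\rhoB)$ gives $\opTr{\MBj X_k}=\langle L_j,X_k\rangle_{\rhoB}$, whence, using symmetry of the weight matrix $\sfW$ and of the inner product, $\sfTr{\sfW\sfHB[X]}+\sfTr{\sfW\sfHB[X]^\intercal}=2\sum_{jk}\sfW_{jk}\langle X_j,L_k\rangle_{\rhoB}$; and $\sfwbar=\sfTr{\sfW\muTwo}$. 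Completing the square and using $\sum_{jk}\sfW_{jk}\langle L_j,L_k\rangle_{\rhoB}=\sfTr{\sfW\sfK}$ then yields
\[
f(X)=\sum_{jk}\sfW_{jk}\,\langle X_j-L_j,\,X_k-L_k\rangle_{\rhoB}+\cSLD ,
\]
since by definition $\cSLD=\sfTr{\sfW(\muTwo-\sfK)}$.

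Finally I would observe that the leading term is nonnegative: as $\sfW\ge0$ is real symmetric, write $\sfW=\sum_a\lambda_a v_av_a^\intercal$ with $\lambda_a\ge0$ and $v_a\in\bbr^n$; then for any Hermitian $Y_1,\dots,Y_n$ one has $\sum_{jk}\sfW_{jk}\langle Y_j,Y_k\rangle_{\rhoB}=\sum_a\lambda_a\langle Z_a,Z_a\rangle_{\rhoB}\ge0$ with $Z_a:=\sum_j(v_a)_jY_j$, because $\langle\cdot,\cdot\rangle_{\rhoB}$ is a positive semidefinite form. Taking $Y_j=X_j-L_j$ gives $f(X)\ge\cSLD$ for all Hermitian $X$, with equality at the feasible point $X_j=L_j$; hence $\cHo\ge\cSLD$ and $\cNH\ge\cHo\ge\cSLD$.

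The main — essentially the only — obstacle is the index/transpose bookkeeping in the conversion step: one must use the conventions $\sfTr{\sfW A}=\sum_{jk}\sfW_{jk}A_{kj}$, $\sfZ_{\mathrm{B},jk}[X]=\opTr{\rhoB X_kX_j}$ and $\sfH_{\mathrm{B},jk}[X]=\opTr{\MBj X_k}$ carefully so that the two linear terms recombine symmetrically and the cross terms genuinely reassemble into $\langle X_j-L_j,X_k-L_k\rangle_{\rhoB}$, which requires only that $\sfW$ be symmetric (true for any weight matrix). One should also record the standing assumption that $\rhoB$ is positive definite, guaranteeing that the Hermitian $L_j$ — and hence the minimizer — exist. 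Everything after the reduction is just the averaged-state/$\MBj$-moment transcription of the known point-estimation fact that the Holevo bound dominates the SLD Cram\'{e}r--Rao bound.
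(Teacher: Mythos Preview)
Your proposal is correct and follows essentially the same line as the paper's proof: both drop the nonnegative $\sfTrAbs$ term from \eqref{def:BHbound2}, express the surviving quadratic in terms of the symmetrized inner product $\langle\cdot,\cdot\rangle_{\rhoB}$, and complete the square about the Bayesian SLDs $L_j$. The only cosmetic difference is that the paper first specializes to a rank-one weight $\sfW=\sfc\sfc^\intercal$, obtains $\cHo\ge\sfc^\intercal(\muTwo-\sfK)\sfc$ for every $\sfc$, and then infers the general case, whereas you work with a general positive semidefinite $\sfW$ throughout and invoke its spectral decomposition at the end; these are two presentations of the same reduction.
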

\begin{proof}
  First, we set $\theta$-independent $\sfW$ as the rank-1 projector $\sfW=\sfc\sfc^\intercal$ with 
  $\sfc\in\bbr^n$.
  Next, we ignore the second term in the minimization \eqref{def:BHbound2} to
  obtain a lower bound for $\cHo$.
  This gives the desired result.
\begin{align*}
  \cHo-\sfc^\intercal\muTwo\sfc&\ge \min_{X} \big\{\frac12\sum_{j,k}\sfc_j\sfc_k(\opTr{\rhoB(X_jX_k+X_kX_j)}\nonumber\\
  &\hspace{15mm}- \opTr{\MBj X_k}- \opTr{X_j\MBk})\big\}\\
  & = \min_{X} \{\langle X_\sfc,X_\sfc\rangle_{\rhoB}
  -\langle X_\sfc,L_\sfc\rangle_{\rhoB}-\langle L_\sfc,X_\sfc\rangle_{\rhoB}
  \}\\
  & = \min_{X} \{\langle X_\sfc-L_\sfc,X_\sfc-L_\sfc\rangle_{\rhoB}\}
  -\langle L_\sfc,L_\sfc\rangle_{\rhoB}\\
  & =-\langle L_\sfc,L_\sfc\rangle_{\rhoB}\\
  & =-\sfc^\intercal\sfK\sfc,
\end{align*}
  where we set $X_\sfc=\sum_j\sfc_jX_j$ and $L_\sfc=\sum_j\sfc_jL_j$. 
  Optimization over $X_j$ is solved by the choice $X_j=L_j$ (Hermitian). 
  Since this is true for any choice of $\sfc$, we prove the matrix inequality $\sfV_\mathrm{B}\ge \muTwo-\sfK$. 
  Thereby, we obtain the relation $\cHo\ge\cSLD$.  
\end{proof}

\subsubsection{Relation to the Bayesian RLD Cram\'{e}r-Rao bound}
We next show that the proposed bound is also tighter than the Bayesian RLD Cram\'{e}r-Rao bound \eqref{def:BRLDCR}. 
A key element to prove this result is Lemma used by Holevo (see for example, Ref.~\cite[Lemma 6.6.1]{holevo}). 
This establishes the identity:
\begin{multline}\label{lem:ho}
\sfTr{\sfW\sfA}+\sfTrAbs{\sfW\sfB}=\\
\min_{\sfV:\,\mathrm{real\ symmetric}}\left\{\sfTr{\sfW \sfV}\,|\,\sfV\ge\sfA+\I\sfB \right\},
\end{multline}
for any $\sfA,\sfB,\sfW\in\bbr^{n\times n}$ such that $\sfW>0$, $\sfA^\intercal=\sfA$, and $\sfB^\intercal=-\sfB$. 
With this Lemma we can prove the next proposition. 
\begin{proposition}
$\cHo\ge\cRLD$, and hence, $\cNH\ge\cRLD$.
\end{proposition}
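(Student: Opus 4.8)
The plan is to follow the same strategy as in the proof that $\cHo\ge\cSLD$, now exploiting the right logarithmic derivative (RLD) structure: reduce to a parameter-independent weight matrix, rewrite the first two terms of $\cHo$ via Holevo's identity~\eqref{lem:ho}, complete a square in the RLD inner product to produce $\widetilde{\sfK}$, and invoke~\eqref{lem:ho} a second time to recognize the result as $\cRLD$. Concretely, I would start from the parameter-independent form~\eqref{def:BHbound2} together with $\sfwbar=\sfTr{\sfW\muTwo}$, so that $\cHo\ge\cRLD$ reduces to proving, for every tuple of Hermitian operators $X=(X_1,\dots,X_n)$,
\begin{multline*}
\sfTr{\sfW\,\Re\sfZB[X]}+\sfTrAbs{\sfW\,\Im\sfZB[X]}\\
-\sfTr{\sfW\sfHB[X]}-\sfTr{\sfW\sfHB[X]^\intercal}\\
\ge\ -\sfTr{\sfW\,\Re\widetilde{\sfK}}+\sfTrAbs{\sfW\,\Im\widetilde{\sfK}} .
\end{multline*}
It suffices to treat $\sfW>0$; the general case $\sfW\ge0$ follows by applying the result to $\sfW+\epsilon I$ and letting $\epsilon\to0^+$. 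Since $\sfZB[X]$ is Hermitian, $\Re\sfZB[X]$ is real symmetric and $\Im\sfZB[X]$ real antisymmetric, so~\eqref{lem:ho} rewrites the first two terms as $\min\{\sfTr{\sfW\sfV}\,|\,\sfV\ \text{real symmetric},\ \sfV\ge\sfZB[X]\}$.

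The crucial step is a completion of the square in the RLD inner product $\langle A,B\rangle_{\rhoB}^{R}:=\opTr{\rhoB B A^\dagger}$. With $\widetilde{L}_{j}=\rhoB^{-1}\MBj$ the Bayesian RLDs, the equation $\rhoB\widetilde{L}_{j}=\MBj$ together with Hermiticity of $X_j$ and $\MBj$ gives $\sfZ_{\mathrm{B},jk}[X]=\langle X_j,X_k\rangle_{\rhoB}^{R}$, $\widetilde{\sfK}_{jk}=\langle\widetilde{L}_{j},\widetilde{L}_{k}\rangle_{\rhoB}^{R}$, and identifies $\sfHB[X]$ and $\sfHB[X]^\intercal$ with the two RLD cross terms $\langle\widetilde{L}_{j},X_k\rangle_{\rhoB}^{R}$ and $\langle X_j,\widetilde{L}_{k}\rangle_{\rhoB}^{R}$ (both real, since $\MBj$ and $X_k$ are Hermitian). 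Positivity of the Gram matrix of the operators $X_j-\widetilde{L}_{j}$ in this inner product then yields the Hermitian matrix inequality
\[
\sfZB[X]+\widetilde{\sfK}\ \ge\ \sfHB[X]+\sfHB[X]^\intercal .
\]

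To assemble the argument: for any feasible $\sfV\ge\sfZB[X]$, set $\sfM:=\sfV-\sfHB[X]-\sfHB[X]^\intercal$, which is real symmetric and, by the inequality just displayed, satisfies $\sfM\ge\sfZB[X]-\sfHB[X]-\sfHB[X]^\intercal\ge-\widetilde{\sfK}=-\Re\widetilde{\sfK}-\I\Im\widetilde{\sfK}$, while $\sfTr{\sfW\sfV}-\sfTr{\sfW\sfHB[X]}-\sfTr{\sfW\sfHB[X]^\intercal}=\sfTr{\sfW\sfM}$ because $\sfW^\intercal=\sfW$. Minimizing over $\sfV$ and then over $X$ shows that the left-hand side of the target inequality is at least $\min\{\sfTr{\sfW\sfM}\,|\,\sfM\ \text{real symmetric},\ \sfM\ge-\Re\widetilde{\sfK}-\I\Im\widetilde{\sfK}\}$, which by a second application of~\eqref{lem:ho} (symmetric part $-\Re\widetilde{\sfK}$, antisymmetric part $-\Im\widetilde{\sfK}$, and $\sfTrAbs{-A}=\sfTrAbs{A}$) equals $-\sfTr{\sfW\,\Re\widetilde{\sfK}}+\sfTrAbs{\sfW\,\Im\widetilde{\sfK}}$. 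Restoring $\sfwbar=\sfTr{\sfW\muTwo}$ gives $\cHo\ge\sfTr{\sfW(\muTwo-\Re\widetilde{\sfK})}+\sfTrAbs{\sfW\,\Im\widetilde{\sfK}}=\cRLD$, and $\cNH\ge\cHo$ then finishes the proof. The main obstacle I anticipate is not a deep one: it is the bookkeeping in the two applications of~\eqref{lem:ho}, namely checking that the data fed into the identity is genuinely real symmetric / real antisymmetric (hence tracking the Hermitian-versus-real nature of $\sfZB[X]$, $\sfHB[X]$ and $\widetilde{\sfK}$) and disposing of the boundary case $\sfW\ge0$ by the perturbation $\sfW\mapsto\sfW+\epsilon I$; the square-completion identity itself is a short computation from $\rhoB\widetilde{L}_{j}=\MBj$.
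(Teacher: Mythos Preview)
Your proposal is correct and follows essentially the same approach as the paper: apply Holevo's identity~\eqref{lem:ho} once to rewrite the first two terms of~\eqref{def:BHbound2} as a constrained minimum, complete the square in the RLD inner product to reduce the constraint to $\sfM\ge-\widetilde{\sfK}$, and apply~\eqref{lem:ho} a second time. The only cosmetic difference is that where you invoke Gram positivity $\sfZB[X-\widetilde{L}]\ge0$ directly, the paper phrases the same step as ``enlarging the domain of $X_j$ to all operators'' and minimizing (so that the optimizer $X_j=\widetilde{L}_j$ kills $\sfZB[X-\widetilde{L}]$); your extra care about $\sfW\ge0$ versus $\sfW>0$ via the $\epsilon$-perturbation is not addressed in the paper but is a reasonable precaution.
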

\begin{proof}
Notice that $\Re \sfZB[X]$ and $\sfHB[X]+\sfHB[X]^\intercal$ are real symmetric and 
$\Im \sfZB[X]$ is real antisymmetric. 
We then apply the Holevo's Lemma \eqref{lem:ho} to minimization for the expression \eqref{def:BHbound2}. 
\begin{multline*}
  \cHo-\overline{\sfw}\\
  \quad = \min_{X_j:\,\mathrm{Hermitian}} \min_{\sfV:\,\mathrm{real\ symmetric}}
  \big\{ \sfTr{\sfW \sfV}\,|\,\sfV\ge\Re\sfZB[X]\\
  -\sfHB[X]+\sfHB[X]^\intercal+\I\Im\sfZB[X]\big\}.
\end{multline*}
By the definition of the Bayesian RLD \eqref{def:BRLD}, we can rewrite the matrix 
in the condition as follows.
\begin{align*}
&\Re\sfZB[X]-\sfHB[X]+\sfHB[X]^\intercal+\I\Im\sfZB[X]\\
&\quad =\sfZB[X]-\big[\opTr{\rhoB \widetilde{L}_k{X}_j}\big]-\big[\opTr{\rhoB {X}_k\widetilde{L}_j^\dagger}\big]\\
&\quad =\sfZB[X-\widetilde{L}]-\sfZB[\widetilde{L}]\\
&\quad =\sfZB[X-\widetilde{L}]-\widetilde{\sfK}.
\end{align*}
Finally, we can carry out optimization over $X_j$ by enlarging the domain to any operators to obtain the following lower bound. 
\begin{align*}
&\cHo-\overline{\sfw}\\
&\quad \ge  \min_{\sfV:\,\mathrm{real\ symmetric}}\min_{X_j}\big\{ \sfTr{\sfW \sfV}\,|\,\sfV\ge\sfZB[X-\widetilde{L}]-\widetilde{\sfK}\big\}\\
&\quad =\min_{\sfV:\,\mathrm{real\ symmetric}}\big\{ \sfTr{\sfW \sfV}\,|\,\sfV\ge-\widetilde{\sfK}\big\}\\
&\quad =-\sfTr{\sfW\,\Re\widetilde{\sfK}}+\sfTrAbs{\sfW\,\Im\widetilde{\sfK}},
\end{align*}
where minimization about $\sfV$ is solved by Holevo's Lemma (Eq.~\eqref{lem:ho}).  
\end{proof}

\section{Conclusion and outlook}
In summary we proposed a new lower bound for the Bayes risk, called the Bayesian Nagaoka-Hayashi bound. 
This bound then can be bounded by a Bayesian version of the Holevo bound.
It was shown that the new lower bound is tighter than the existing lower bound based on the Bayesian quantum Cram\'{e}r-Rao bounds \cite{personick71,holevo_1973,rd2020}.
The proposed Bayesian lower bounds are based on the idea developed in the previous publication \cite{lja2021}. 

In this paper, we only derived bounds, and hence there are many future directions. 
Firstly, we need to analyze achievability of new proposed bounds both in the asymptotic and non asymptotic theory. 
For example, Ref.~\cite{gill08} investigated the first order asymptotics of the Bayes risk. 
Secondly, relations to other Bayesian bounds need to be examined. 
In particular, a further analysis is needed when the weight matrix is parameter-dependent. 
Thirdly, an extension including random parameters in the presence of nuisance parameters will be 
important. Non-random parameter estimation with the nuisance parameter has been investigated recently \cite{syh2021,tad2020}. 
Extensions of these formulations in the Bayesian setting will be presented in the future work. 

\section*{Acknowledgmenet} 
The work is partly supported by JSPS KAKENHI Grant Number JP21K11749 and JP21K04919. 
The author would like to thank Mr.~L. Conlon and Dr.~S.M. Assad for collaboration at the early stage of the project. 
We also sincerely thank anonymous reviewers for careful reading, constructive comments, and valuable suggestions to improve the manuscript.

\appendix
\section{Proofs}
In this appendix, we list necessary lemmas to prove the results. 
Some of lemmas are obtained by extending known results \cite{holevo,nagaoka89,hayashi99}, 
and we keep proofs short. 

Consider a tensor product Hilbert space $\bbH=\cH_{1}\otimes\cH_{2}$, 
where $\cH_1=\bbc^n$ and $\cH_2=\cH$ as in the main text.
Let $\cL(\bbH)$, $\cL_{+}(\bbH)$, and $\cL_{h}(\bbH)$ be the set of linear operators, positive semidefinite, and Hermitian matrices on $\bbH$, respectively. 
(Strictly positive matrix case is denoted by $\cL_{++}(\bbH)$.) 
We denote by $\cL_{\mathrm{sym}_1}(\bbH)$ the set of symmetric matrices under the partial transpose $(\cdot)^{T_1}$ with respect to the subspace $\cH_{1}$. 
$\mathrm{sym}_{\pm}(\bbL):= \frac12 (\bbL\pm \bbL^{T_1})$ denotes the symmetrized (antisymmetrized) matrix of $\bbL\in\cL(\bbH)$ with respect to $\cH_{1}$. 
We also use the notation such as 
$\cL_{\mathrm{sym}_1,+}(\bbH)=\cL_{\mathrm{sym}_1}(\bbH)\cap \cL_{+}(\bbH)$ to denote 
the intersection of two sets. 
An operator $\bbL\in\cL(\bbH)$ is also expressed as a matrix-valued matrix in the form: 
$\bbL=[\bbL_{jk}]$ where each component $\bbL_{jk}$ is a matrix on $\cH_2$. 
This representation is unique if we fix the basis of $\cH_1$. 
With this representation, $[\bbL^{T_1}]_{jk}=[\bbL_{kj}]$ and 
$[\mathrm{sym}_{\pm}(\bbL)]_{jk}=(\bbL_{jk}\pm \bbL_{kj})/2$. 

Given a strictly positive matrix $\bbS\in\cL_{\mathrm{sym}_1,++}(\bbH)$, which is symmetric under the partial transpose with respect to $\cH_1$, 
and an Hermitian matrix $\bbX$ on $\bbH$, we define the following optimization problem. 
\begin{equation}\label{eq:OptProb}
f(\bbS,\bbX):=\min_{\bbL\in \cL_{\mathrm{sym}_1,h}(\bbH)}\left\{\Tr{\bbS\bbL}\,|\, \bbL\ge \bbX \right\}.
\end{equation}
We give an alternative expression for $f(\bbS,\bbX)$ and then give lower bounds below.

First, we consider a special case for $\bbS=\sfW\otimes S>0$ with $\sfW^T=\sfW$. 
With this assumption, a direct calculation shows the identity
\begin{equation}\label{eq:lem-indep}
\mathrm{sym}_{\pm}(\sqrt{\bbS}\bbX\sqrt{\bbS})=\sqrt{\bbS}\mathrm{sym}_{\pm}(\bbX)\sqrt{\bbS}. 
\end{equation}
This rewrites the problem as follows. 
\begin{lemma}\label{lem:h1}
For $\bbS=\sfW\otimes S\in\cL_{\mathrm{sym}_1,++}(\bbH)$, 
the optimization problem \eqref{eq:OptProb} is expressed as
\begin{multline}\label{eq:OptProb2}
f(\sfW\otimes S,\bbX)= \Tr{\sqrt{\bbS}\mathrm{sym}_{+}(\bbX)\sqrt{\bbS}}\\
+ \min_{\bbV\in \cL_{\mathrm{sym}_1,h}(\bbH)}\left\{\Tr{\bbV}\,|\,\bbV\ge \sqrt{\bbS}\mathrm{sym}_{-}(\bbX)\sqrt{\bbS}\right\}.
\end{multline}
\end{lemma}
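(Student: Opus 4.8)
The plan is to reduce the special case of \eqref{eq:OptProb} to the standard ``minimal trace above a fixed Hermitian matrix'' problem by two successive bijective changes of variable and to read off \eqref{eq:OptProb2} from the result; the whole argument hinges on the product structure $\bbS=\sfW\otimes S$ with $\sfW^T=\sfW$. First I would perform the congruence $\bbL\mapsto\mathbb{M}:=\sqrt{\bbS}\,\bbL\,\sqrt{\bbS}$. Since $\bbS>0$, the operator $\sqrt{\bbS}=\sqrt{\sfW}\otimes\sqrt{S}$ is positive, Hermitian and invertible, so this is a linear bijection of $\cL_h(\bbH)$ onto itself with $\Tr{\bbS\bbL}=\Tr{\mathbb{M}}$ and $\bbL\ge\bbX\Leftrightarrow\mathbb{M}\ge\sqrt{\bbS}\,\bbX\,\sqrt{\bbS}$. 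The step requiring care is that it also maps $\cL_{\mathrm{sym}_1,h}(\bbH)$ bijectively onto itself: because the first tensor factor $\sqrt{\sfW}$ of $\sqrt{\bbS}$ is symmetric, conjugation by $\sqrt{\bbS}$ commutes with the partial transpose, $(\sqrt{\bbS}\,\bbL\,\sqrt{\bbS})^{T_1}=\sqrt{\bbS}\,\bbL^{T_1}\,\sqrt{\bbS}$, which I would verify on elementary tensors $\bbL=C\otimes D$ using $(\sqrt{\sfW}\,C\,\sqrt{\sfW})^T=\sqrt{\sfW}\,C^T\,\sqrt{\sfW}$ and extend by linearity. Hence the problem becomes $f(\sfW\otimes S,\bbX)=\min\{\Tr{\mathbb{M}}\,|\,\mathbb{M}\in\cL_{\mathrm{sym}_1,h}(\bbH),\ \mathbb{M}\ge\sqrt{\bbS}\,\bbX\,\sqrt{\bbS}\}$.

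Next I would split the Hermitian operator $\sqrt{\bbS}\,\bbX\,\sqrt{\bbS}$ into its $T_1$-symmetric and $T_1$-antisymmetric parts and invoke the identity \eqref{eq:lem-indep} to identify these with $\sqrt{\bbS}\,\mathrm{sym}_+(\bbX)\,\sqrt{\bbS}$ and $\sqrt{\bbS}\,\mathrm{sym}_-(\bbX)\,\sqrt{\bbS}$ respectively. Since $\bbX$ is Hermitian, $\mathrm{sym}_+(\bbX)$ is Hermitian and $T_1$-symmetric, and therefore so is $\sqrt{\bbS}\,\mathrm{sym}_+(\bbX)\,\sqrt{\bbS}\in\cL_{\mathrm{sym}_1,h}(\bbH)$. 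Consequently the shift $\bbV:=\mathbb{M}-\sqrt{\bbS}\,\mathrm{sym}_+(\bbX)\,\sqrt{\bbS}$ is once more a bijection of $\cL_{\mathrm{sym}_1,h}(\bbH)$ onto itself; it subtracts the constant $\Tr{\sqrt{\bbS}\,\mathrm{sym}_+(\bbX)\,\sqrt{\bbS}}$ from the objective and turns the constraint $\mathbb{M}\ge\sqrt{\bbS}\,\bbX\,\sqrt{\bbS}$ into $\bbV\ge\sqrt{\bbS}\,\mathrm{sym}_-(\bbX)\,\sqrt{\bbS}$. This is precisely \eqref{eq:OptProb2}.

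The only genuinely nontrivial ingredient is the commutation of conjugation by $\sqrt{\bbS}$ with the partial transpose, which is exactly where the hypothesis $\bbS=\sfW\otimes S$ with $\sfW^T=\sfW$ is used; once that bijection is established the remainder is bookkeeping. A minor point I would be careful to spell out is that each of the three manipulations — the congruence, the application of \eqref{eq:lem-indep}, and the shift — preserves feasibility in both directions, so that the two minima, and not merely one inequality between them, coincide.
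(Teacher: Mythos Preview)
Your proof is correct and follows essentially the same approach as the paper. The only cosmetic difference is that the paper performs the change of variable in a single substitution $\bbL=\mathrm{sym}_{+}(\bbX)+\bbS^{-1/2}\bbV\,\bbS^{-1/2}$, whereas you factor it into the congruence $\mathbb{M}=\sqrt{\bbS}\,\bbL\,\sqrt{\bbS}$ followed by the shift $\bbV=\mathbb{M}-\sqrt{\bbS}\,\mathrm{sym}_{+}(\bbX)\,\sqrt{\bbS}$; both hinge on the same observation that conjugation by $\sqrt{\bbS}=\sqrt{\sfW}\otimes\sqrt{S}$ commutes with the partial transpose because $\sqrt{\sfW}$ is real symmetric.
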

\begin{proof}
Write the constraint of optimization as $\bbL\ge \bbX$ $\Lra$ $\sqrt{\bbS} \bbL \sqrt{\bbS}\ge \sqrt{\bbS} \bbX \sqrt{\bbS}
=\mathrm{sym}_{+}(\sqrt{\bbS}\bbX\sqrt{\bbS})+\mathrm{sym}_{-}(\sqrt{\bbS}\bbX\sqrt{\bbS})$.  
We note that the first term $\mathrm{sym}_{+}(\sqrt{\bbS}\bbX\sqrt{\bbS})=\sqrt{\bbS}\mathrm{sym}_{+}(\bbX)\sqrt{\bbS}$ is in the set $\cL_{\mathrm{sym}_1,h}(\bbH)$ due to the assumptions $\bbS=\sfW\otimes S$ and $\sfW^T=\sfW$.  
We substitute $\bbL=\mathrm{sym}_{+}(\bbX)+\bbS^{-1/2}\bbV\bbS^{-1/2}$ to get 
\[
\Tr{\bbS\bbL}=\Tr{\sqrt{\bbS}\mathrm{sym}_{+}(\bbX)\sqrt{\bbS}}+\Tr{\bbV},
\]
where optimization is subject to $\bbV\ge \sqrt{\bbS}\mathrm{sym}_{-}(\bbX)\sqrt{\bbS}$. 
Because $\bbS^{-1/2}=\sfW^{-1/2}\otimes S^{-1/2}$ preserves symmetry under the partial transpose with respect to $\cH_1$, $\bbV$ is subject to the constraint $\cL_{\mathrm{sym}_1,h}(\bbH)$.  
\end{proof}

Next, we show optimization \eqref{eq:OptProb2} can be solved with the further assumption of dim$\cH_{1}=2$. 
\begin{lemma}\label{lem:2para_indep}
When the dimension of the Hilbert space $\cH_{1}$ is two (dim$\cH_{1}=2$), 
the above optimization problem \eqref{eq:OptProb} is carried out for 
$\bbS=\sfW\otimes S\in\cL_{\mathrm{sym}_1,++}(\bbH)$.
\begin{align}\label{eq:lem-h2}
\hspace{0mm}f(\sfW\otimes S,\bbX)&=f_1(\sfW\otimes S,\bbX),\\
f_1(\sfW\otimes S,\bbX)&= \Tr{\sqrt{\bbS}\mathrm{sym}_{+}(\bbX)\sqrt{\bbS}}\nonumber\\
&\hspace{1mm}+ \sqrt{\Det{\sfW}}\Trabs{S(\bbX_{12}-\bbX_{21}) }, 
\nonumber
\end{align}
where $\Trabs{X}$ denotes the sum of absolute values for the eigenvalues of the operator $X\in \cL(\cH)$. 
\end{lemma}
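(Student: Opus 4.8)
The plan is to build on Lemma~\ref{lem:h1}, which already splits $f(\sfW\otimes S,\bbX)$ into the explicit term $\Tr{\sqrt{\bbS}\,\mathrm{sym}_{+}(\bbX)\sqrt{\bbS}}$ plus the residual minimization
\[
g:=\min\{\Tr{\bbV}\,|\,\bbV\in\cL_{\mathrm{sym}_1,h}(\bbH),\ \bbV\ge \sqrt{\bbS}\,\mathrm{sym}_{-}(\bbX)\sqrt{\bbS}\},
\]
so the whole statement reduces to proving $g=\sqrt{\Det{\sfW}}\,\Trabs{S(\bbX_{12}-\bbX_{21})}$. It is worth stressing at the outset that the partial-transpose constraint $\bbV=\bbV^{T_1}$ is exactly what keeps $g$ positive: dropping it, one could take $\bbV$ equal to the operator on the right-hand side, whose trace vanishes since that operator has zero diagonal blocks. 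So this constraint must be tracked through every step.

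First I would make the two-dimensionality of $\cH_1$ explicit. Writing operators on $\bbH$ as $2\times2$ matrices with entries in $\cL(\cH)$, $\mathrm{sym}_{-}(\bbX)$ is block-antidiagonal with off-diagonal entry $\tfrac12 Y$, where $Y:=\bbX_{12}-\bbX_{21}$ is anti-Hermitian because $\bbX$ is Hermitian. The key elementary fact is that conjugating a block-antidiagonal $2\times2$ matrix with off-diagonal entry $Y$ by $\sqrt{\sfW}\otimes\sqrt{S}$ produces the block-antidiagonal matrix with off-diagonal entry $(\det\sqrt{\sfW})\,\sqrt{S}\,Y\,\sqrt{S}$; this is a two-line computation that uses $\sqrt{\sfW}$ being a real symmetric $2\times2$ matrix, together with $\det\sqrt{\sfW}=\sqrt{\Det{\sfW}}$. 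Hence $A:=\sqrt{\bbS}\,\mathrm{sym}_{-}(\bbX)\sqrt{\bbS}$ is block-antidiagonal with off-diagonal entry $\tfrac12\sqrt{\Det{\sfW}}\,Z$, where $Z:=\sqrt{S}\,Y\,\sqrt{S}$ is again anti-Hermitian.

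Next I would solve the residual semidefinite program. For $2\times2$ blocks, $\bbV\in\cL_{\mathrm{sym}_1,h}(\bbH)$ means exactly that $\bbV$ has blocks $V_{11},V_{12},V_{21},V_{22}$ with $V_{21}=V_{12}$ and all of $V_{11},V_{12},V_{22}$ Hermitian, and $\bbV\ge A$ becomes positivity of the $2\times2$ block matrix with diagonal blocks $V_{11},V_{22}$ and off-diagonal block $B:=V_{12}-\tfrac12\sqrt{\Det{\sfW}}\,Z$ (note $B^\dagger=V_{12}+\tfrac12\sqrt{\Det{\sfW}}\,Z$, consistent with the block structure). I would then minimize in two nested stages. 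For fixed $V_{12}$, minimizing $\Tr{V_{11}}+\Tr{V_{22}}$ subject to that block-positivity constraint is the standard semidefinite characterization of the trace norm and equals $2\,\Tr{|B|}=2\|B\|_{1}$; positivity of $S$, built into the hypothesis $\bbS\in\cL_{\mathrm{sym}_1,++}(\bbH)$, keeps the polar decomposition used for attainability well defined. It remains to minimize $\|V_{12}-\tfrac12\sqrt{\Det{\sfW}}\,Z\|_{1}$ over Hermitian $V_{12}$; since $\tfrac12(B-B^\dagger)=-\tfrac12\sqrt{\Det{\sfW}}\,Z$ is fixed while the Hermitian part of $B$ is free, and $\|B\|_1=\|B^\dagger\|_1$ forces $\|B\|_1\ge\|\tfrac12(B-B^\dagger)\|_1$, the minimum is $\tfrac12\sqrt{\Det{\sfW}}\,\|Z\|_1$, attained at $V_{12}=0$. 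Finally, $Z$ is normal and similar to $S(\bbX_{12}-\bbX_{21})$, so $\|Z\|_1=\Trabs{Z}=\Trabs{S(\bbX_{12}-\bbX_{21})}$; combining the two stages gives $g=\sqrt{\Det{\sfW}}\,\Trabs{S(\bbX_{12}-\bbX_{21})}$, and with Lemma~\ref{lem:h1} this yields $f=f_1$.

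I expect the main obstacle to be careful bookkeeping rather than a single hard idea: verifying that in two dimensions $\cL_{\mathrm{sym}_1,h}(\bbH)$ is exactly the ``all three blocks Hermitian, off-diagonals equal'' family, and being precise about which trace functional appears where --- the trace norm $\|\cdot\|_1$ for the generally non-normal intermediate operator $B$, versus $\Trabs{\cdot}$ (sum of absolute eigenvalues) for the anti-Hermitian $Z$, the two agreeing only at the end because $Z$ is normal. The argument is, in effect, an operator-valued specialization of Holevo's Lemma~\eqref{lem:ho} to $2\times2$ blocks, and one could alternatively route it through that identity.
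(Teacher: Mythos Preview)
Your proposal is correct and follows the same overall architecture as the paper: invoke Lemma~\ref{lem:h1}, use $\dim\cH_1=2$ to write $A=\sqrt{\bbS}\,\mathrm{sym}_-(\bbX)\sqrt{\bbS}$ in block-antidiagonal form with off-diagonal entry proportional to $\sqrt{\Det{\sfW}}\,\sqrt{S}(\bbX_{12}-\bbX_{21})\sqrt{S}$, then solve the residual SDP for $g$ and exhibit the optimizer. The only genuine difference is tactical, in how the residual SDP is solved. The paper conjugates $\bbV-A$ by $u\otimes I$ with $u$ the $2\times2$ unitary diagonalizing $\sigma_y$, which immediately yields the scalar constraint $V_{11}+V_{22}\ge\pm 2Y$ and then closes with Holevo's lemma~\eqref{lem:ho}; the off-diagonal block $V_{12}$ never enters. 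You instead run a two-stage minimization: first fix $V_{12}$ and use the standard SDP characterization of the trace norm to get $2\|B\|_1$, then minimize over Hermitian $V_{12}$ via the triangle inequality $\|B\|_1\ge\tfrac12\|B-B^\dagger\|_1$. Both routes land on the same optimizer $\bbV_*=\mathrm{diag}(|Y|,|Y|)$ (your $V_{12}=0$), and you yourself note the Holevo-lemma alternative at the end. Your version has the mild advantage of making explicit why $V_{12}$ can be dropped; the paper's version is more self-contained in that it does not import the trace-norm SDP characterization. Either way the argument goes through without gaps.
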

\begin{proof}
When dim$\cH_{1}=2$, we explicitly express the right hand side of the constraint in optimization \eqref{eq:OptProb2} 
as follows. Define $Y:=\sqrt{\Det{\sfW}}\sqrt{S}\frac12(\bbX_{12}-\bbX_{21})\sqrt{S}$, then 
\begin{align*}
&\sqrt{\bbS}\mathrm{sym}_{-}(\bbX)\sqrt{\bbS}\\
&\quad =\sqrt{\sfW}\otimes \sqrt{S}\frac12\left(\begin{array}{cc}0 & \bbX_{12}-\bbX_{21} \\ \bbX_{21}-\bbX_{12} & 0\end{array}\right)\sqrt{\sfW}\otimes \sqrt{S}\\
&\quad =\sqrt{\sfW}\left(\begin{array}{cc}0 & -1 \\ 1 & 0\end{array}\right)\sqrt{\sfW}\otimes
\sqrt{S}\frac12(\bbX_{12}-\bbX_{21})\sqrt{S}\\
&\quad =\sqrt{\Det{\sfW}}\left(\begin{array}{cc}0 & -1 \\ 1 & 0\end{array}\right)\otimes \sqrt{S}\frac12(\bbX_{12}-\bbX_{21})\sqrt{S}\\ 
&\quad =\left(\begin{array}{cc}0 & Y \\ -Y & 0\end{array}\right). 
\end{align*}
Note that $Y^\dagger=-Y$ holds, since $\bbX$ is Hermitian. 
This implies that $\I Y$ is an Hermitian matrix on $\cH_2$, and hence $\I Y$ has real eigenvalues. 
With this constraint, it is known that optimization can be carried out \cite{hayashi99,lja2021}. 
Here we provide a simple one. Let $u$ be a unitary matrix that diagonalizes the matrix 
{\scriptsize $\left(\begin{array}{cc}0 & -\I \\ \I & 0\end{array}\right)$}, and define 
the unitary $\bbU:=u\otimes I$. 
The block-diagonal components of the matrix $\bbU (\bbV-\sqrt{\bbS}\mathrm{sym}_{-}(\bbX)\sqrt{\bbS})\bbU^\dagger$ 
are 
\begin{equation*}
\frac12(\bbV_{11}+\bbV_{22})\pm\frac{1}{2}(Y-Y^\dagger). 
\end{equation*}
Since $\bbU (\bbV-\sqrt{\bbS}\mathrm{sym}_{-}(\bbX)\sqrt{\bbS})\bbU^\dagger\ge0$, 
the sum of block-diagonal components of $\bbV$ must satisfy
\begin{equation}
\bbV_{11}+\bbV_{22}\ge \pm (Y-Y^\dagger)=\pm 2Y. 
\end{equation}
By solving optimization with this constraint, we obtain a lower bound for the original problem. 
This is carried out by applying Lemma in Holevo's book (\cite[Lemma 6.6.1]{holevo}) as 
\begin{align*}
&\min_{\bbV\in\cL_{\mathrm{sym}_1,h}(\bbH)}\left\{\Tr{\bbV}\,|\,\bbV\ge \sqrt{\bbS}\mathrm{sym}_{-}(\bbX)\sqrt{\bbS}\right\}\\
&\qquad \ge\min_{\bbV\in \cL_{\mathrm{sym}_1,h}(\bbH)}\left\{\Tr{\bbV}\,|\, \bbV_{11}+\bbV_{22}\ge \pm 2Y\right\}\\
&\qquad =\Tr{|2Y|}\\
&\qquad =\sqrt{\Det{\sfW}}\Tr{\left|\sqrt{S}(\bbX_{12}-\bbX_{21})\sqrt{S})\right| }\\
&\qquad =\sqrt{\Det{\sfW}}\Trabs{S(\bbX_{12}-\bbX_{21}) }. 
\end{align*}
The optimizer here is given by $\bbV_{11}+\bbV_{22}=2|V|$. 
In fact, we see that this lower bound is attained by the optimizer 
$\bbV_*={\scriptsize \left(\begin{array}{cc}|Y| & 0 \\ 0 & |Y|\end{array}\right)}$, 
which is in the set $\cL_{\mathrm{sym}_1,h}(\bbH)$. 
This is because ${\scriptsize \left(\begin{array}{cc}|Y| & Y \\ -Y & |Y|\end{array}\right)}\ge0$. 
Therefore, this proves the claim.  
 
\end{proof}

As we cannot execute optimization for the original problem \eqref{eq:OptProb2} in general, 
we will derive a lower bound which takes a similar form as the Holevo bound. 
We denote the partial trace of $\bbV$ with respect to the second Hilbert space $\cH_2$ by $\sfV=\mathrm{Tr}_{2}\{ \bbV\}$. 
For any $\bbV$ in the original optimization problem, $\sfV$ becomes real symmetric. 
Let $\cL_{\mathrm{sym},h}(\cH_{1})$ the set of real symmetric matrices on $\cH_1$.  
With this notation, we can get a lower bound for the problem. 
\begin{lemma}\label{lem:h_indep}
For $\bbS=\sfW\otimes S\in\cL_{\mathrm{sym}_1,++}(\bbH)$, the next lower bound holds for $f(\sfW\otimes S,\bbX)$.
\begin{align*}
f(\sfW\otimes S,\bbX)&\ge f_2(\sfW\otimes S,\bbX),\\
f_2(\bbS,\bbX)&:= \Tr{\Re \sfZ(\bbS,\bbX)}+ \Trabs{\Im \sfZ(\bbS,\bbX)},
\end{align*} 
where $\sfZ(\bbS,\bbX):= \mathrm{Tr}_{2}\{\sqrt{\bbS}\bbX\sqrt{\bbS}\}$. 
\end{lemma}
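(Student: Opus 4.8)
The plan is to reduce the claimed inequality $f(\sfW\otimes S,\bbX)\ge f_2(\sfW\otimes S,\bbX)$ to a statement purely about $n\times n$ real matrices by taking the partial trace over $\cH_2$ in the representation \eqref{eq:OptProb2} established in Lemma \ref{lem:h1}. Concretely, I would start from
\[
f(\sfW\otimes S,\bbX)= \Tr{\sqrt{\bbS}\mathrm{sym}_{+}(\bbX)\sqrt{\bbS}}+ \min_{\bbV\in \cL_{\mathrm{sym}_1,h}(\bbH)}\left\{\Tr{\bbV}\,|\,\bbV\ge \sqrt{\bbS}\mathrm{sym}_{-}(\bbX)\sqrt{\bbS}\right\},
\]
and observe that, writing $\sfZ(\bbS,\bbX)=\mathrm{Tr}_2\{\sqrt{\bbS}\bbX\sqrt{\bbS}\}$, the first term equals $\Tr{\mathrm{Tr}_2\{\sqrt{\bbS}\mathrm{sym}_+(\bbX)\sqrt{\bbS}\}} = \Tr{\Re\sfZ(\bbS,\bbX)}$, since the symmetrized part of $\sfZ$ under transpose of $\cH_1$ is its real symmetric part. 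So the whole problem is to show that the second (minimization) term is bounded below by $\Trabs{\Im\sfZ(\bbS,\bbX)}$.

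For that minimization term, the key step is to apply the partial trace $\mathrm{Tr}_2$ as a positive, trace-preserving-on-$\cH_1$ map: if $\bbV\ge \sqrt{\bbS}\mathrm{sym}_{-}(\bbX)\sqrt{\bbS}$ then $\sfV:=\mathrm{Tr}_2\{\bbV\}\ge \mathrm{Tr}_2\{\sqrt{\bbS}\mathrm{sym}_{-}(\bbX)\sqrt{\bbS}\}$, and moreover $\Tr{\bbV}=\Tr{\sfV}$ and $\sfV\in\cL_{\mathrm{sym},h}(\cH_1)$ is real symmetric because $\bbV\in\cL_{\mathrm{sym}_1,h}(\bbH)$. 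Now I would identify $\mathrm{Tr}_2\{\sqrt{\bbS}\mathrm{sym}_{-}(\bbX)\sqrt{\bbS}\}$ with $\mathrm{sym}_-$ of $\sfZ(\bbS,\bbX)$ under transpose on $\cH_1$, which in turn equals $\I\,\Im\sfZ(\bbS,\bbX)$ up to sign: since $\sfZ$ is Hermitian, its antisymmetric-under-transpose part is exactly $\I\,\Im\sfZ$. Hence the relaxed problem becomes
\[
\min_{\bbV}\{\Tr{\bbV}\,|\,\bbV\ge\cdots\}\ \ge\ \min_{\sfV\in\cL_{\mathrm{sym},h}(\cH_1)}\left\{\Tr{\sfV}\,\middle|\,\sfV\ge \I\,\Im\sfZ(\bbS,\bbX)\right\},
\]
and the right-hand side is a standard scalar optimization: for a real antisymmetric $\sfB$, $\min\{\Tr{\sfV}\,|\,\sfV\text{ real sym},\ \sfV\ge\I\sfB\}=\Trabs{\sfB}$ (the eigenvalues of $\I\sfB$ come in $\pm$ pairs, so the minimal dominating real symmetric matrix has trace equal to the sum of absolute values). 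This is precisely the $\sfW=I$ special case of the Holevo lemma \eqref{lem:ho}, or can be obtained directly by diagonalizing $\I\sfB$.

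The main obstacle — really the only nontrivial bookkeeping — is verifying the two identifications of partial traces: that $\mathrm{Tr}_2\{\sqrt{\bbS}\,\mathrm{sym}_{\pm}(\bbX)\,\sqrt{\bbS}\}$ equals $\mathrm{sym}_{\pm}$ (under $\cH_1$-transpose) of $\sfZ(\bbS,\bbX)=\mathrm{Tr}_2\{\sqrt{\bbS}\bbX\sqrt{\bbS}\}$, and that this symmetrization corresponds on the level of $n\times n$ Hermitian matrices to taking real/imaginary parts. The first follows because $\bbS=\sfW\otimes S$ with $\sfW$ real symmetric makes $\sqrt{\bbS}$ symmetric under $T_1$, so conjugation by $\sqrt{\bbS}$ commutes with $T_1$, and $\mathrm{Tr}_2$ intertwines $T_1$ on $\bbH$ with transpose on $\cH_1$; one has to be mildly careful that $S$ acts only on $\cH_2$ so the block structure $[\sqrt{\bbS}\bbX\sqrt{\bbS}]_{jk}=\sqrt{\sfW}\sqrt{S}\,(\sum_{l,m}\sqrt{\sfW}_{jl}\bbX_{lm}\sqrt{\sfW}_{mk})\,\sqrt{S}$ is handled correctly. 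The second identification is just the elementary fact that for a Hermitian matrix $\sfZ$, $(\sfZ+\sfZ^\intercal)/2=\Re\sfZ$ and $(\sfZ-\sfZ^\intercal)/2=\I\,\Im\sfZ$. Once these are in place the inequality drops out; I would write it as the chain of inequalities above, noting that the single inequality sign (rather than equality) is exactly the content of "lower bound," arising from relaxing the $\bbH$-level SDP to its $\cH_1$-level partial-trace image.
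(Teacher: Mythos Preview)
Your proposal is correct and follows essentially the same route as the paper's proof: start from the representation in Lemma~\ref{lem:h1}, relax the operator constraint $\bbV\ge\sqrt{\bbS}\,\mathrm{sym}_-(\bbX)\,\sqrt{\bbS}$ to its $\cH_1$-level image under the partial trace $\mathrm{Tr}_2$, identify the resulting $n\times n$ matrices with $\Re\sfZ$ and $\I\,\Im\sfZ$, and then invoke Holevo's lemma \eqref{lem:ho} (in the $\sfW=I$ case) to evaluate the scalar minimization. The only slip is the displayed block formula for $[\sqrt{\bbS}\bbX\sqrt{\bbS}]_{jk}$, which carries an extra factor of $\sqrt{\sfW}$ on the left; it should read $\sqrt{S}\big(\sum_{l,m}(\sqrt{\sfW})_{jl}\bbX_{lm}(\sqrt{\sfW})_{mk}\big)\sqrt{S}$, but this does not affect the argument.
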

\begin{proof}
First, we have an inequality:
\begin{multline*}
f(\sfW\otimes S,\bbX)\ge\Tr{\mathrm{sym}_{+}(\sqrt{\bbS}\bbX\sqrt{\bbS})} \\
+\min_{\sfV\in \cL_{\mathrm{sym},h}(\cH_{1})}\{\Tr{\sfV}\,|\,\sfV\ge \mathrm{Tr}_{2}\{\mathrm{sym}_{-}(\sqrt{\bbS}\bbX\sqrt{\bbS})\}\}, 
\end{multline*}
since any matrix $\bbV$ under the original constraint in $f(\bbS,\bbX)$ 
satisfies the condition $\sfV=\mathrm{Tr}_{2}\{\bbV\}\ge \mathrm{Tr}_{2}\{\mathrm{sym}_{-}(\sqrt{\bbS}\bbX\sqrt{\bbS})\}$. 
The expression of the second line of the above inequality directly follows from Lemma \ref{lem:h1}. 
Second, a direct calculation shows that 
$\Tr{\sqrt{\bbS}\mathrm{sym}_{+}(\bbX)\sqrt{\bbS}}=\Tr{\Re \sfZ(\bbS,\bbX)}$. 
Last, we apply the lemma \cite[Lemma 6.6.1]{holevo} to solve the above optimization.  
This is done by noting $\sfV\ge \mathrm{Tr}_{2}\{\mathrm{sym}_{-}(\sqrt{\bbS}\bbX\sqrt{\bbS})\}$ implies 
$\sfV\ge \pm\mathrm{Tr}_{2}\{\mathrm{sym}_{-}(\sqrt{\bbS}\bbX\sqrt{\bbS})\}$ and 
$\mathrm{Tr}_{2}\{\mathrm{sym}_{-}(\sqrt{\bbS}\bbX\sqrt{\bbS})\}=\I\Im \sfZ(\bbS,\bbX)$. 
   
\end{proof}

Next, we discuss a more general case where $\bbS$ takes a convex mixture of tensor products of the form:
\begin{equation}
\bbS=\sum_{j}\pi_j\sfW_j\otimes S_j\ \mbox{ with }\ \sfW_j^T=\sfW_j>0, 
\end{equation}
where $(\pi_j)$ is a probability distribution on a finite set. 
In this case, we cannot use the relation \eqref{eq:lem-indep} anymore, 
and Lemma \ref{lem:h1} does not hold. 
Nevertheless, similar techniques are used to obtain a lower bound for the optimization problem. 
The basic logic in the following discussion is that optimizing individual functions gives 
a lower value than optimizing the sum of the functions. 
\begin{lemma}\label{lem:h0}
For $\bbS=\sum_{j}\pi_j\sfW_j\otimes S_j\in \cL_{\mathrm{sym}_1,++}(\bbH)$, 
we have a lower bound for the problem \eqref{eq:OptProb}:
\begin{align}\label{eq:OptProb3}
&f(\bbS,\bbX)\ge f_3(\bbS,\bbX),\\
&f_3(\bbS,\bbX)= \Tr{\sqrt{\bbS}\mathrm{sym}_{+}(\bbX)\sqrt{\bbS}}\nonumber\\
&\hspace{-3mm}+ \sum_j\pi_j\min_{\bbV\in \cL_{\mathrm{sym}_1,h}(\bbH)}\left\{\Tr{\bbV}\,|\,\bbV\ge \sqrt{\bbS_j}\mathrm{sym}_{-}(\bbX)\sqrt{\bbS_j}\right\}.
\nonumber
\end{align}
\end{lemma}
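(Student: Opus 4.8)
The plan is to establish the inequality pointwise in the feasible set: I will show that $\Tr{\bbS\bbL}\ge f_3(\bbS,\bbX)$ for every $\bbL\in\cL_{\mathrm{sym}_1,h}(\bbH)$ with $\bbL\ge\bbX$, after which minimizing the left-hand side over $\bbL$ yields $f(\bbS,\bbX)\ge f_3(\bbS,\bbX)$. Fix such an $\bbL$ and write $\bbS_j:=\sfW_j\otimes S_j$, so that $\sqrt{\bbS_j}=\sqrt{\sfW_j}\otimes\sqrt{S_j}$. By linearity and cyclicity of the trace, $\Tr{\bbS\bbL}=\sum_j\pi_j\Tr{\bbS_j\bbL}=\sum_j\pi_j\Tr{\sqrt{\bbS_j}\,\bbL\,\sqrt{\bbS_j}}$. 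The idea is now to apply to each summand exactly the manipulation of Lemma \ref{lem:h1}: although that lemma fails for the mixture $\bbS$, each $\bbS_j$ separately is a tensor product of the required form, so the identity \eqref{eq:lem-indep} and the surrounding structural facts remain available term by term.

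Concretely, for each $j$ the Hermitian factor $\sqrt{\bbS_j}=\sqrt{\sfW_j}\otimes\sqrt{S_j}$ has $\sqrt{\sfW_j}$ real symmetric, so conjugation by it commutes with the partial transpose $(\cdot)^{T_1}$ and hence maps $\cL_{\mathrm{sym}_1,h}(\bbH)$ into itself. Thus $\sqrt{\bbS_j}\,\bbL\,\sqrt{\bbS_j}\in\cL_{\mathrm{sym}_1,h}(\bbH)$, and $\bbL\ge\bbX$ gives $\sqrt{\bbS_j}\,\bbL\,\sqrt{\bbS_j}\ge\sqrt{\bbS_j}\,\bbX\,\sqrt{\bbS_j}$. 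Splitting the right side into its $(\cdot)^{T_1}$-symmetric and antisymmetric parts and using \eqref{eq:lem-indep} for $\bbS_j$, the matrix $\bbV_j:=\sqrt{\bbS_j}\,\bbL\,\sqrt{\bbS_j}-\sqrt{\bbS_j}\,\mathrm{sym}_{+}(\bbX)\,\sqrt{\bbS_j}$ lies in $\cL_{\mathrm{sym}_1,h}(\bbH)$ and obeys $\bbV_j\ge\sqrt{\bbS_j}\,\mathrm{sym}_{-}(\bbX)\,\sqrt{\bbS_j}$, so it is feasible for the $j$-th minimization in \eqref{eq:OptProb3} (which is itself bounded below, since $\bbV=\bbV^{T_1}$ together with positivity of the partial transpose forces $\bbV\ge\pm\sqrt{\bbS_j}\,\mathrm{sym}_{-}(\bbX)\,\sqrt{\bbS_j}$, hence $\bbV\ge0$). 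Therefore $\Tr{\sqrt{\bbS_j}\,\bbL\,\sqrt{\bbS_j}}=\Tr{\bbS_j\,\mathrm{sym}_{+}(\bbX)}+\Tr{\bbV_j}$, with $\Tr{\bbV_j}$ bounded below by that minimization. Multiplying by $\pi_j$, summing over $j$, and using $\sum_j\pi_j\Tr{\bbS_j\,\mathrm{sym}_{+}(\bbX)}=\Tr{\bbS\,\mathrm{sym}_{+}(\bbX)}=\Tr{\sqrt{\bbS}\,\mathrm{sym}_{+}(\bbX)\,\sqrt{\bbS}}$ gives $\Tr{\bbS\bbL}\ge f_3(\bbS,\bbX)$, and minimizing over $\bbL$ finishes the proof.

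I do not anticipate a real obstacle; the whole content sits in the reduction, and the single inequality that is not a verbatim repeat of Lemma \ref{lem:h1} is the passage from $\sum_j\pi_j\Tr{\bbV_j}$ to $\sum_j\pi_j$ times the individual minimum. That step is an honest inequality rather than an equality precisely because all the $\bbV_j$ are generated from one common $\bbL$, whereas the relaxed problem optimizes each $\bbV_j$ independently --- this is the ``sum of the separately optimized pieces lies below the jointly optimized sum'' mechanism flagged before the lemma, and it is exactly what downgrades the identity of Lemma \ref{lem:h1} to the inequality in \eqref{eq:OptProb3}. The remaining ingredients --- that $\sqrt{\sfW_j}\otimes\sqrt{S_j}$ preserves Hermiticity and $(\cdot)^{T_1}$-symmetry, and that \eqref{eq:lem-indep} holds for each $\bbS_j$ --- are routine and already in hand.
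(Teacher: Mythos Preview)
Your argument is correct and is essentially the paper's own proof unpacked pointwise: the paper first bounds $f(\bbS,\bbX)\ge\sum_j\pi_j f(\bbS_j,\bbX)$ and then invokes Lemma~\ref{lem:h1} for each tensor-product term, while you fix a feasible $\bbL$, construct each $\bbV_j$ explicitly, and bound $\Tr{\bbV_j}$ by the corresponding minimum --- the same ``optimize each summand separately'' mechanism in both cases, followed by the same recombination of the $\mathrm{sym}_+$ parts via linearity of the trace. One caveat: your parenthetical justification that the $j$-th minimization is bounded below is flawed, since the partial transpose $(\cdot)^{T_1}$ is \emph{not} a positive map, so $\bbV-A\ge0$ does not imply $\bbV+A=(\bbV-A)^{T_1}\ge0$; this aside is inessential to the proof (the inequality $\Tr{\bbV_j}\ge\inf$ holds regardless), but you should drop or replace it.
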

\begin{proof}
Denote $\bbS_j=\sfW_j\otimes S_j$, we obtain a lower bound for the optimization problem as 
\begin{align*}
f(\bbS,\bbX)&=\min_{\bbL\in \cL_{\mathrm{sym}_1,h}(\bbH)}\left\{\sum_j\pi_j\Tr{\bbS_j\bbL}\,|\, \bbL\ge \bbX \right\}\\
&\ge \sum_j\pi_jf(\sfW_j\otimes S_j,\bbX). 
\end{align*}
For each optimization defined by $\bbS_j$ and $\bbX$, we apply Lemma \ref{lem:h1} to get
\begin{align*}
&f(\bbS,\bbX)\ge \sum_j \pi_j\Tr{\sqrt{\bbS_j}\mathrm{sym}_{+}(\bbX)\sqrt{\bbS_j}}\\
&\ + \sum_j\pi_j\min_{\bbV\in \cL_{\mathrm{sym}_1,h}(\bbH)}\left\{\Tr{\bbV}\,|\,\bbV\ge \sqrt{\bbS_j}\mathrm{sym}_{-}(\bbX)\sqrt{\bbS_j}\right\}.
\end{align*}
Lastly, we note the first term is also expressed as
\begin{align*}
\sum_j \pi_j\Tr{\sqrt{\bbS_j}\mathrm{sym}_{+}(\bbX)\sqrt{\bbS_j}}
&= \sum_j \pi_j\Tr{\bbS_j\mathrm{sym}_{+}(\bbX)}\\
&=\Tr{\bbS\mathrm{sym}_{+}(\bbX)}\\
&=\Tr{\sqrt{\bbS}\mathrm{sym}_{+}(\bbX)\sqrt{\bbS}}, 
\end{align*}
where linearity and cyclic property of the trace is used.   
\end{proof}

When dim$\cH_{1}=2$, we obtain the next lower bound by applying Lemma \ref{lem:2para_indep} 
in Lemma \ref{lem:h0}. 
\begin{lemma}\label{lem:h2}
When the dimension of the Hilbert space $\cH_{1}$ is two 
and $\bbS=\sum_{j}\pi_j\sfW_j\otimes S_j\in \cL_{\mathrm{sym}_1,++}(\bbH)$, 
we have a lower bound: 
\begin{align}\label{eq:OptProb4}
f(\bbS,\bbX)&\ge f_4(\bbS,\bbX)\\
f_4(\bbS,\bbX)&= \Tr{\sqrt{\bbS}\mathrm{sym}_{+}(\bbX)\sqrt{\bbS}}\nonumber\\
\hspace{0mm}+ \sum_j &\pi_j\sqrt{\Det{\sfW_j}}\Trabs{S_j(\bbX_{12}-\bbX_{21}) }.
\nonumber
\end{align}
\end{lemma}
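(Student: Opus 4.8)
The plan is to derive $f_4$ by composing two lemmas already established for exactly this situation: Lemma \ref{lem:h0}, which strips the convex mixture into a sum of decoupled single-tensor-product optimizations, and Lemma \ref{lem:2para_indep}, which solves each of those in closed form using $\dim\cH_1=2$. Essentially nothing new is needed; the content is in checking that the $j$-indexed pieces produced by Lemma \ref{lem:h0} are precisely the objects Lemma \ref{lem:2para_indep} evaluates.

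Concretely, first I would apply Lemma \ref{lem:h0} to $\bbS=\sum_j\pi_j\sfW_j\otimes S_j$, obtaining $f(\bbS,\bbX)\ge f_3(\bbS,\bbX)$. Writing $\bbS_j:=\sfW_j\otimes S_j$, the leading term of $f_3$ is already $\Tr{\sqrt{\bbS}\,\mathrm{sym}_{+}(\bbX)\,\sqrt{\bbS}}$, so it remains only to evaluate, for each $j$,
\[
g_j:=\min_{\bbV\in\cL_{\mathrm{sym}_1,h}(\bbH)}\left\{\Tr{\bbV}\,|\,\bbV\ge\sqrt{\bbS_j}\,\mathrm{sym}_{-}(\bbX)\,\sqrt{\bbS_j}\right\}.
\]
But $g_j$ is exactly the second term in the expression \eqref{eq:OptProb2} of Lemma \ref{lem:h1} for $f(\sfW_j\otimes S_j,\bbX)$. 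Since $\sfW_j^T=\sfW_j>0$ and, for a regular model, $S_j>0$, we have $\bbS_j\in\cL_{\mathrm{sym}_1,++}(\bbH)$, so Lemma \ref{lem:2para_indep} applies with $\sfW=\sfW_j$, $S=S_j$. Comparing its conclusion \eqref{eq:lem-h2} with Lemma \ref{lem:h1} and cancelling the common term $\Tr{\sqrt{\bbS_j}\,\mathrm{sym}_{+}(\bbX)\,\sqrt{\bbS_j}}$ gives $g_j=\sqrt{\Det{\sfW_j}}\,\Trabs{S_j(\bbX_{12}-\bbX_{21})}$. Substituting the $g_j$ back into $f_3$ leaves the leading term untouched and turns the sum into $\sum_j\pi_j\sqrt{\Det{\sfW_j}}\,\Trabs{S_j(\bbX_{12}-\bbX_{21})}$, i.e. $f(\bbS,\bbX)\ge f_4(\bbS,\bbX)$.

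The only point needing any care — and it is mild — is transplanting the $\dim\cH_1=2$ computation in the proof of Lemma \ref{lem:2para_indep} (phrased there for a single tensor product) onto each summand $\bbS_j$: one checks that $\sqrt{\bbS_j}\,\mathrm{sym}_{-}(\bbX)\,\sqrt{\bbS_j}$ is off-diagonal in the $\cH_1$-block structure with entry $Y_j=\sqrt{\Det{\sfW_j}}\,\sqrt{S_j}\,\tfrac12(\bbX_{12}-\bbX_{21})\,\sqrt{S_j}$ satisfying $Y_j^\dagger=-Y_j$, so the constraint forces $\bbV_{11}+\bbV_{22}\ge\pm 2Y_j$, and Holevo's Lemma \cite[Lemma 6.6.1]{holevo} yields $g_j=\Tr{|2Y_j|}=\sqrt{\Det{\sfW_j}}\,\Trabs{S_j(\bbX_{12}-\bbX_{21})}$, attained at $\bbV=\mathrm{diag}(|Y_j|,|Y_j|)\in\cL_{\mathrm{sym}_1,h}(\bbH)$. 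Because Lemma \ref{lem:h0} has already separated the optimizations, taking $\sum_j\pi_j g_j$ termwise is legitimate, and the claim follows. (If one only wants the inequality $f\ge f_4$, the $\ge$ half of each $g_j$-evaluation suffices and the attainment step can be skipped.)
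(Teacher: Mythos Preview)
Your proposal is correct and follows exactly the same route as the paper: apply Lemma~\ref{lem:h0} to reduce the convex mixture to a sum of single-tensor-product optimizations, and then invoke Lemma~\ref{lem:2para_indep} (which uses $\dim\cH_1=2$) on each summand $\bbS_j=\sfW_j\otimes S_j$ to evaluate the minimization term as $\sqrt{\Det{\sfW_j}}\,\Trabs{S_j(\bbX_{12}-\bbX_{21})}$. Your write-up simply makes explicit the matching of the $g_j$ term to the optimization part of Lemma~\ref{lem:h1}, which the paper leaves implicit.
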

\begin{proof}
Use Lemma \ref{lem:h0} and optimization for each $\bbS_j=\sfW_j\otimes S_j$ is done by Lemma \ref{lem:2para_indep}, 
we obtain the desired lower bound.  
  
\end{proof}

Finally, we apply Lemma \ref{lem:h_indep} in Lemma \ref{lem:h0} to get the Holevo-type bound. 
\begin{lemma}\label{lem:h3}
For $\bbS=\sum_{j}\pi_j\sfW_j\otimes S_j\in \cL_{\mathrm{sym}_1,++}(\bbH)$, 
we have a lower bound: 
\begin{align}\label{eq:OptProb5}
f(\bbS,\bbX)&\ge f_5(\bbS,\bbX)\\
f_5(\bbS,\bbX)&= \Tr{\sqrt{\bbS}\mathrm{sym}_{+}(\bbX)\sqrt{\bbS}}\nonumber\\
&\hspace{2cm}+ \sum_j \pi_j \Trabs{\Im \sfZ(\bbS_j,\bbX)}. \nonumber
\end{align}
\end{lemma}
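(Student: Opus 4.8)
The plan is to obtain this Holevo-type bound in exactly the same way Lemma~\ref{lem:h2} was obtained, namely by inserting the dimension-free tensor-product estimate of Lemma~\ref{lem:h_indep} into the convex-splitting bound of Lemma~\ref{lem:h0} (whereas Lemma~\ref{lem:h2} inserted the two-dimensional Lemma~\ref{lem:2para_indep}). No new idea is needed beyond correctly recombining the resulting real-part terms.

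First I would recall the opening inequality in the proof of Lemma~\ref{lem:h0}: since the feasibility constraint $\bbL\ge\bbX$ does not involve the summation index $j$, optimizing each summand $\Tr{\bbS_j\bbL}$ separately can only lower the value, so with $\bbS_j:=\sfW_j\otimes S_j$,
\[
f(\bbS,\bbX)=\min_{\bbL\in\cL_{\mathrm{sym}_1,h}(\bbH)}\Big\{\sum_j\pi_j\Tr{\bbS_j\bbL}\,\Big|\,\bbL\ge\bbX\Big\}\ \ge\ \sum_j\pi_j\,f(\sfW_j\otimes S_j,\bbX).
\]
Each $\bbS_j$ is a tensor product with $\sfW_j^T=\sfW_j>0$ and $S_j>0$, hence $\bbS_j\in\cL_{\mathrm{sym}_1,++}(\bbH)$, so Lemma~\ref{lem:h_indep} applies term by term and yields
\[
f(\sfW_j\otimes S_j,\bbX)\ \ge\ \Tr{\Re\sfZ(\bbS_j,\bbX)}+\Trabs{\Im\sfZ(\bbS_j,\bbX)}.
\]
Substituting, the only remaining task is to see that the weighted sum of the first terms collapses to $\Tr{\sqrt{\bbS}\,\mathrm{sym}_{+}(\bbX)\sqrt{\bbS}}$. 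This I would handle as in the proof of Lemma~\ref{lem:h_indep}: because $\bbS_j$ is symmetric under the partial transpose $(\cdot)^{T_1}$ while $\mathrm{sym}_{-}(\bbX)$ is antisymmetric, $\Tr{\bbS_j\,\mathrm{sym}_{-}(\bbX)}=0$, so $\Tr{\Re\sfZ(\bbS_j,\bbX)}=\Tr{\bbS_j\bbX}=\Tr{\bbS_j\,\mathrm{sym}_{+}(\bbX)}$; summing with weights $\pi_j$ and using linearity and cyclicity of the trace gives $\sum_j\pi_j\Tr{\Re\sfZ(\bbS_j,\bbX)}=\Tr{\bbS\,\mathrm{sym}_{+}(\bbX)}=\Tr{\sqrt{\bbS}\,\mathrm{sym}_{+}(\bbX)\sqrt{\bbS}}$, which is the first term of $f_5$. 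The leftover $\sum_j\pi_j\Trabs{\Im\sfZ(\bbS_j,\bbX)}$ is its second term, establishing $f(\bbS,\bbX)\ge f_5(\bbS,\bbX)$.

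The main point requiring care — modest, but the place where the argument would break if one were cavalier — is the recombination of the real parts: it must proceed through $\Tr{\sqrt{\bbS}\,\mathrm{sym}_{+}(\bbX)\sqrt{\bbS}}=\Tr{\bbS\,\mathrm{sym}_{+}(\bbX)}$ and the \emph{linearity of the trace in $\bbS$}, not through any (false) linearity of $\sqrt{\bbS}$ or of $\sfZ(\bbS,\cdot)$ in $\bbS$. I would also note explicitly that, in contrast with Lemma~\ref{lem:h2}, no restriction on $\dim\cH_1$ enters here, since Lemma~\ref{lem:h_indep} is valid in all dimensions.
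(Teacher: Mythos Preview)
Your proposal is correct and follows essentially the same route as the paper: the paper's one-line proof (``For each $\bbS_j$ of Lemma~\ref{lem:h0}, we apply Lemma~\ref{lem:h_indep}'') is exactly the convex-splitting-then-tensor-product-bound argument you spell out, and the recombination of the real parts you carry out is precisely the computation done at the end of the proof of Lemma~\ref{lem:h0}. Your added remark that one must use linearity of the trace in $\bbS$ (not any putative linearity of $\sqrt{\bbS}$ or $\sfZ$) is a useful clarification but introduces no new content beyond the paper's argument.
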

\begin{proof}
For each $\bbS_j$ of Lemma \ref{lem:h0}, we apply Lemma \ref{lem:h_indep}.  
\end{proof}

We remark that the above results can be extended to a convex mixture with the following integral form 
under the assumption of integrability of the matrix: 
\begin{equation}
\bbS=\int d\theta\,\pi(\theta)\sfW(\theta)\otimes S_\theta,  
\end{equation}
where $\pi$ is a probability density function, $\sfW(\theta)^T=\sfW(\theta)>0$, and $S_\theta>0$ for all $\theta$.

\end{document}